\relax
\documentclass[letterpaper]{article} 

\usepackage{aaai22}  
\usepackage{times}  
\usepackage{helvet}  
\usepackage{courier}  
\usepackage[hyphens]{url}  
\usepackage{graphicx} 
\urlstyle{rm} 
\usepackage{natbib}  
\usepackage{caption} 
\DeclareCaptionStyle{ruled}{labelfont=normalfont,labelsep=colon,strut=off} 
\frenchspacing  
\setlength{\pdfpagewidth}{8.5in}  
\setlength{\pdfpageheight}{11in}  

\usepackage{hyperref} 
\usepackage{ellipsis, ragged2e} 

\usepackage[english]{babel}
\usepackage[utf8]{inputenc}
\usepackage[final]{microtype}

\usepackage{amsmath,amssymb,amsthm}
\usepackage{csquotes}
\usepackage{booktabs}

\usepackage{pgfplots}
\pgfplotsset{compat=1.16}
\usepackage{tikz}
\usetikzlibrary{shapes,positioning,arrows,arrows.meta,calc,automata,matrix,fit,backgrounds}
\tikzstyle{state}+=[minimum size = 6mm, inner sep=0,outer sep=1]

\colorlet{disabled}{lightgray}
\tikzstyle{state}=[draw,rectangle,inner sep=5pt,rounded corners=2pt]
\tikzstyle{action}=[font=\small,inner sep=0pt,outer sep=3pt]
\tikzstyle{actionnode}=[circle,draw=black,fill=black,minimum size=1mm,inner sep=0,outer sep=0]
\tikzstyle{actionedge}=[draw,-]
\tikzstyle{prob}=[font=\scriptsize,inner sep=0pt,outer sep=1pt]
\tikzstyle{probedge}=[draw,->]
\tikzstyle{directedge}=[draw,->]
\tikzset{chainarrow/.tip={Stealth[length=3pt]}}
\tikzset{>=chainarrow}

\usepackage{xparse}
\usepackage{mathtools}
\usepackage{environ}
\usepackage{marvosym}
\usepackage{bbm}
\usepackage{fontawesome5}

\newtheorem{theorem}{Theorem}
\newtheorem{corollary}{Corollary}
\newtheorem{lemma}{Lemma}
\theoremstyle{definition}
\newtheorem{definition}{Definition}
\newtheorem{remark}{Remark}
\newtheorem{example}{Example}

\usepackage{algorithmicx,algorithm}
\usepackage[noend]{algpseudocode}


\usepackage[capitalize]{cleveref}
\DeclarePairedDelimiter{\delimabs}{\lvert}{\rvert}
\DeclarePairedDelimiter{\delimcardinality}{\lvert}{\rvert}
\DeclarePairedDelimiter{\delimnorm}{\lVert}{\rVert}

\NewDocumentCommand{\abs}{sm}{\IfBooleanTF{#1}{\delimabs*{#2}}{\delimabs{#2}}}
\NewDocumentCommand{\cardinality}{sm}{\IfBooleanTF{#1}{\delimcardinality*{#2}}{\delimcardinality{#2}}}
\NewDocumentCommand{\norm}{sm}{\IfBooleanTF{#1}{\delimnorm*{#2}}{\delimnorm{#2}}}
\NewDocumentCommand{\powerset}{r()}{2^{#1}}

\newcommand{\unionSym}{\cup}
\newcommand{\unionBin}{\mathbin{\unionSym}}

\newcommand{\UnionSym}{\bigcup}

\newcommand{\union}{\unionBin}

\newcommand{\Union}{\UnionSym}

\newcommand{\Naturals}{\mathbb{N}}

\newcommand{\Reals}{\mathbb{R}}



\NewDocumentCommand{\Measures}{d()}{\IfNoValueTF{#1}{\Pi}{\Pi(#1)}}
\NewDocumentCommand{\integral}{d<> m m}{\IfNoValueTF{#1}{\int #2\,d#3}{\int_{#1} #2\,d#3}}
\NewDocumentCommand{\Expectation}{s d[]}{\IfNoValueTF{#2}{\mathbb{E}}{\mathbb{E}\IfBooleanTF{#1}{\left[#2\right]}{[#2]}}}
\NewDocumentCommand{\Probability}{s d[]}{\mathop{\mathrm{Pr}}\IfValueT{#2}{\IfBooleanTF{#1}{\left[#2\right]}{[#2]}}}

\newcommand{\MC}{\mathsf{M}}
\newcommand{\MDP}{\mathcal{M}}

\newcommand{\States}{S}
\newcommand{\initialstate}{\hat{s}}
\newcommand{\Actions}{A}
\NewDocumentCommand{\stateactions}{r()}{{\Actions}(#1)}
\NewDocumentCommand{\mctransitions}{d()}{\IfNoValueTF{#1}{\delta}{\delta(#1)}}
\NewDocumentCommand{\mdptransitions}{d()}{\IfNoValueTF{#1}{\Delta}{\Delta(#1)}}
\NewDocumentCommand{\actionstate}{r<> r()}{s_{#1}(#2)}
\NewDocumentCommand{\actioncost}{d()}{\IfNoValueTF{#1}{{C}}{{C}(#1)}}


\NewDocumentCommand{\Infinitepaths}{d<>}{\IfNoValueTF{#1}{\mathsf{Paths}}{\mathsf{Paths}_{#1}}}
\NewDocumentCommand{\Finitepaths}{d<>}{\IfNoValueTF{#1}{\mathsf{FPaths}}{\mathsf{FPaths}_{#1}}}
\newcommand{\strategy}{\pi}
\NewDocumentCommand{\Strategies}{d<>}{\IfNoValueTF{#1}{\Pi}{\Pi_{#1}}}
\NewDocumentCommand{\StrategiesM}{d<>}{\IfNoValueTF{#1}{\Pi}{\Pi_{#1}}^{\mathsf{M}}}
\NewDocumentCommand{\StrategiesMD}{d<>}{\IfNoValueTF{#1}{\Pi}{\Pi_{#1}}^{\mathsf{MD}}}

\DeclareMathOperator{\SccsOp}{SCC}
\DeclareMathOperator{\BsccsOp}{BSCC}
\DeclareMathOperator{\EcsOp}{EC}
\DeclareMathOperator{\MecsOp}{MEC}

\NewDocumentCommand{\Sccs}{d()}{\IfNoValueTF{#1}{\SccsOp}{\SccsOp(#1)}}
\NewDocumentCommand{\Bsccs}{d()}{\IfNoValueTF{#1}{\BsccsOp}{\BsccsOp(#1)}}
\NewDocumentCommand{\Ecs}{d()}{\IfNoValueTF{#1}{\EcsOp}{\EcsOp(#1)}}
\NewDocumentCommand{\Mecs}{d()}{\IfNoValueTF{#1}{\MecsOp}{\MecsOp(#1)}}

\NewDocumentCommand{\ProbabilityMC}{s r<> d[]}{\mathsf{Pr}_{#2}\IfNoValueF{#3}{\IfBooleanTF{#1}{\!\left[#3\right]\!}{[#3]}}}
\NewDocumentCommand{\ProbabilityMDP}{s r<> r<> d[]}{\mathsf{Pr}_{#2}^{#3}\IfNoValueF{#4}{\IfBooleanTF{#1}{\!\left[#4\right]\!}{[#4]}}}
\NewDocumentCommand{\ProbabilityMDPmax}{s r<> d[]}{\mathsf{Pr}_{#2}^{\max}\IfNoValueF{#3}{\IfBooleanTF{#1}{\!\left[#3\right]\!}{[#3]}}}
\NewDocumentCommand{\ProbabilityMDPsup}{s r<> d[]}{\mathsf{Pr}_{#2}^{\sup}\IfNoValueF{#3}{\IfBooleanTF{#1}{\!\left[#3\right]\!}{[#3]}}}

\NewDocumentCommand{\ExpectationMDP}{s r<> r[]}{\mathbb{E}\IfBooleanTF{#1}{\!\left[#3 \mid #2 \right]\!}{[#3 \mid #2]}}


\newcommand{\goalset}{\mathcal{G}}
\NewDocumentCommand{\expectedval}{r<> d()}{V^{#1}\IfValueT{#1}{(#2)}}
\NewDocumentCommand{\totalsum}{d<>}{\IfValueTF{#1}{\mathcal{R}^{#1}}{\mathcal{R}}}

\NewDocumentCommand{\nstepprob}{r<> d<> d()}{\IfValueTF{#2}{p_{#1}^{#2}}{p_{#1}}\IfValueT{#3}{(#3)}}
\NewDocumentCommand{\nstepnongoal}{r<> d<>}{\IfValueTF{#2}{\mathcal{N}_{#1}^{#2}}{\mathcal{N}_{#1}}}
\NewDocumentCommand{\expectedcost}{d<> d()}{\IfValueTF{#1}{e^{#1}}{e}\IfValueT{#2}{(#2)}}
\NewDocumentCommand{\nstepexpectedcost}{r<> d<>}{\IfValueTF{#2}{\mathcal{E}_{#1}^{#2}}{\mathcal{E}_{#1}}}
\NewDocumentCommand{\pareto}{r<> r<>}{\mathfrak{P}_{#1}^{#2}}
\NewDocumentCommand{\Distributions}{r[]}{\mathcal{D}[#1]}

\DeclareMathOperator{\CVaR}{CVaR}
\DeclareMathOperator{\VaR}{VaR}

\newcommand{\threshold}{\mathsf{t}}

\DeclareMathOperator{\conv}{conv}

\pdfinfo{
/Title (Risk-aware Stochastic Shortest Path)
/Author (Tobias Meggendorfer)
/TemplateVersion (2022.1)
}

\title{Risk-aware Stochastic Shortest Path}
\author{
	Tobias Meggendorfer
}
\affiliations{
	IST Austria\\
	Am Campus 1, 3400 Klosterneuburg, Austria\\
	\texttt{tobias.meggendorfer@ist.ac.at}
}

\usepackage{todonotes}

\advance\textwidth6mm
\advance\hoffset-3mm
\setcounter{secnumdepth}{1}

\begin{document}

\maketitle

\begin{abstract}
	We treat the problem of \emph{risk-aware control} for \emph{stochastic shortest path} (SSP) on \emph{Markov decision processes} (MDP).
	Typically, expectation is considered for SSP, which however is oblivious to the incurred risk.
	We present an alternative view, instead optimizing \emph{conditional value-at-risk} (CVaR), an established risk measure.
	We treat both Markov chains as well as MDP and introduce, through novel insights, two algorithms, based on linear programming and value iteration, respectively.
	Both algorithms offer precise and provably correct solutions.
	Evaluation of our prototype implementation shows that risk-aware control is feasible on several moderately sized models.
\end{abstract}

\section{Introduction}

\emph{Markov decision processes} (MDP) are a standard model for sequential decision making in uncertain environments, applied in, for example, robot motion planning; see e.g.\ \cite{white1993survey,white1985real} for a variety of further examples.
Usually, one aims to control such a system optimally with respect to a \enquote{performance rating}, called \emph{objective}.
In this work, we consider the \emph{stochastic shortest path} (SSP) objective \cite{DBLP:journals/mor/BertsekasT91}, where the goal is to minimize the accumulated cost until a given set of target states is reached.

Traditionally, one seeks a policy minimizing the expectation of this accumulated cost.
However, this policy willingly accepts arbitrary risks to achieve a minimal increase in expected profit.
This may be undesirable, especially when the system in question, for example, models a situation which takes a long time to unfold or is only executed once, such as a retirement savings plan or Mars rover path planning.
In particular, the law of large numbers is not applicable, and expectation alone provides little insight in the actual dynamics.

To remedy this issue, \emph{risk-aware control} proposes several ideas.
A popular approach is to quantify the risk incurred by a policy and then optimizing this \emph{risk measure} instead of expectation.
We briefly discuss some relevant measures:
\emph{Variance} does not focus on bad cases and may even incentivize intentionally performing suboptimally in unexpectedly good situations.
\emph{Worst case} analysis often is too pessimistic in probabilistic environments, considering events with probability $0$ such as a fair coin toss never yielding heads.
\emph{Value-at-risk} (VaR) is the worst $p$-quantile for a given threshold $\threshold \in [0, 1]$.
It approximates the notion of a \enquote{reasonably likely} bad case.
However, VaR ignores the magnitude of worse cases, and has been characterized \enquote{seductive, but dangerous} and \enquote{not sufficient to control risk} \cite{beder1995var}.
\emph{Conditional value-at-risk} (CVaR) (average value-at-risk, expected shortfall) yields the expectation over all outcomes worse than the VaR, i.e.\ the \enquote{tail loss}; see \cref{fig:cvar_example} for a sketch.
As such, it considers outliers, weighted accordingly.
It is an established and \enquote{more consistent measure of risk} \cite{artzner1999coherent,rockafellar2000optimization}, gaining popularity in various fields.
We direct the interested reader to \cite{sarykalin2008value} for detailed comparison between VaR and CVaR, \cite{DBLP:journals/itor/FilippiGS20} for a survey of CVaR applications, and \cite{DBLP:books/daglib/0034641} for further risk measures.

Motivated by these observations, our primary goal in this work is to provide risk-aware control for SSP objectives through the optimization of its CVaR.

\begin{figure}[t]
	\centering
	\begin{tikzpicture}
		\begin{axis}[width=0.9\columnwidth,height=3cm,
			xmin=1, xmax=11, ymin=0, ymax=0.55, ybar,
			bar shift=0pt, axis x line*=middle,
			ytick={0.1,0.3,0.5},
			x label style={at={(axis description cs:-0.05,-0.1)},anchor=north,font=\small},
			y label style={font=\small},
			xlabel={Costs},
			ylabel={Probability}
		]
			\addplot [draw=black,fill=white] coordinates {(2,0.2) (5,0.35) (7,0.25) (8,0.05) (9,0.15)};
			\addplot [fill=gray] coordinates {(7,0.2) (8,0.05) (9,0.15)};
			\node[anchor=north east,inner sep=0pt,outer sep=2pt] (var) at (7,0.5) {$\VaR_{40\%}$};
			\draw[->] (7,0.5) -- (7,0.0);
			\node[anchor=north west,inner sep=0pt,outer sep=2pt] (cvar) at (7.875,0.5) {$\CVaR_{40\%}$};
			\draw[->] (7.875,0.5) -- (7.875,0.0);
		\end{axis}
	\end{tikzpicture}
	\caption{
		Example distribution over costs to showcase VaR and CVaR with threshold $\threshold = 40\%$.
		The bars represent the respective probabilities, while the grey area depicts the part considered by CVaR:
		The sum of the grey area equals the specified threshold of $40\%$, the expectation over it is $7.875$.
	} \label{fig:cvar_example}
\end{figure}
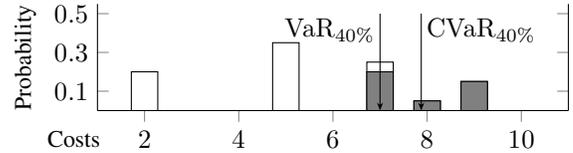

\paragraph{Related Work}
Firstly, \cite{DBLP:conf/lics/KretinskyM18} also considers CVaR, however for mean payoff instead of SSP and only provides an linear programming based solution, while we additionally present a value iteration approach.
Secondly, \cite{DBLP:conf/nips/ChowTMP15} treats the \emph{discounted} variant of the problem, fundamentally relying on the discounting factor to bound the error of the approximation.
Moreover, \cite{DBLP:conf/icra/CarpinCP16} considers two variants of the problem.
For the first, much more restricted variant, they present an approximative formulation together with an algorithm.
For the second variant, which is more general and closer to our problem, they present an approximation of the previous approximative formulation, without any guaranteed bounds.
In contrast to the former two, our approach yields a provably correct, precise result for the infinite horizon problem on MDP, only using standard assumptions.

Many other works dealing with CVaR on MDP, e.g.\ \cite{DBLP:journals/tac/BorkarJ14,DBLP:conf/aaai/KeramatiDTB20,DBLP:journals/mmor/BauerleO11}, often consider finite horizon and/or discounted costs, but not the undiscounted infinite horizon variant.
A broad spectrum of research focusses on risk-aware \emph{reinforcement learning}, typically aiming at best-effort solutions converging to the optimal value in the limit at most and without any guarantees.
Note that especially when considering risk, providing reliable guarantees may be considered vital.
Often, these best-effort solutions also introduce additional constraints, such as restricting to a suboptimal class of policies. 
A different perspective considers \emph{time consistent} risk measures \cite{DBLP:journals/mp/Ruszczynski10}, where risk effectively \enquote{accumulates} along the run; see \cite{DBLP:journals/tac/TamarCGM17} for a CVaR-variant.

\paragraph{Contributions \& Novelty}
We treat, to our knowledge for the first time, the problem of optimizing the \emph{global}, \emph{infinite horizon} risk of SSP on MDP through CVaR, providing \emph{provably correct results}.
We first discuss the problem on Markov chains, derive a central \emph{characterization of CVaR for SSP}, and provide a tailored algorithm.
Then, we present \emph{two novel solution approaches} for MDP, based on this characterization.
One is based on linear programming and one on value iteration (both exponential).
While the primary focus of this work is the theoretical contribution, we also evaluate a prototype implementation, showing that risk-aware control with guarantees is practical even for infinite horizon problems.

\section{Preliminaries}

A (finite, discrete time, time-homogeneous) \emph{Markov decision process} (MDP) \cite{DBLP:books/wi/Puterman94,DBLP:books/lib/Bertsekas05} is a tuple $\MDP = (\States, \Actions, \initialstate, \mdptransitions, \actioncost)$, where
	$\States$ is a finite set of \emph{states},
	$\Actions$ is a finite set of \emph{actions},
	$\initialstate \in \States$ is the \emph{initial state},
	$\mdptransitions(s, a, s') = \Probability[s' | s, a]$ is the Markovian \emph{transition function}, and
	$\actioncost(s, a) \in \Naturals_0$ is the non-negative, integer \emph{cost} associated with taking action $a$ in state $s$.
We choose integer costs for simplicity, however our methods are also applicable to rational costs (by rescaling).
An action $a$ is \emph{available} in state $s$ if $\sum_{s' \in \States} \mdptransitions(s, a, s') = 1$ (the sum is $0$ otherwise).
We write $\Actions(s) \subseteq \Actions$ for the set of all actions available in state $s$.

A \emph{Markov chain} (MC) is an MDP where $\cardinality{\Actions(s)} = 1$ for all states $s \in \States$, i.e.\ the system is fully probabilistic.

The non-determinism in MDP is resolved by \emph{policies}, mappings from finite paths to distributions over actions.
The set of all policies is denoted by $\Strategies$.
A policy is called
	(i)~\emph{deterministic} if it always yields a unique action,
	(ii)~\emph{memoryless} (or \emph{stationary}) if it only depends on the current state, and
	(iii)~\emph{Markovian} if it only depends on the number of steps already performed.
Technically, an MDP with a policy \emph{induces} a Markov chain, which allows to reason about the now fully probabilistic system.
See, e.g., \cite[Chp.~2]{DBLP:books/wi/Puterman94} or \cite[Sec.~10.6]{DBLP:books/daglib/0020348} for formal details.

\paragraph{Stochastic Shortest Path (SSP)} \cite{DBLP:journals/mor/BertsekasT91} is a common objective on MDP, specified by an MDP and a set of \emph{goal states} $\goalset \subseteq \States$.
We are interested in the total accumulated cost until a goal state is reached.
We write $\totalsum<s, \strategy>$ to denote the distribution over total costs achieved by policy $\strategy$ starting in state $s$.
Typically, one optimizes the \emph{expected} total cost, i.e.\ given a state $s$, find a policy $\strategy \in \Strategies$ such that
\begin{equation*}
	\expectedval<\strategy>(s) \coloneqq \Expectation*[{\sum}_{t=0}^\infty \actioncost(s_t, a_t) \mid s, \strategy] = \Expectation[\totalsum<s, \strategy>].
\end{equation*}
is minimal.
As already suggested, instead of expectation, we however are interested in optimizing a \emph{risk measure} of $\totalsum<s, \strategy>$.

\paragraph{Conditional Value-at-Risk (CVaR)} (also known as Average Value-at-Risk (AVaR)) is our proposed alternative to expectation.
To introduce CVaR, we first need to define the notion of \emph{value-at-risk} (VaR).
Intuitively, VaR tries to answer the question \enquote{what is a reasonable bad outcome?}
VaR is parametrized by a threshold $\threshold \in [0, 1]$ and yields the \emph{worst $\threshold$-quantile}, i.e.\ a value $v$ such that an outcome is worse than $v$ with probability $\threshold$.
For example, the $50\%$-VaR effectively is the median.
Formally, given a distribution over natural numbers $X : \Naturals_0 \to [0, 1]$ and $\threshold < 1$ we define
\begin{equation*}
	\VaR_\threshold(X) \coloneqq \min \{v \in \Naturals_0 \mid {\sum}_{x=v + 1}^\infty X(x) \leq t\}.
\end{equation*}
(As we are considering costs, larger values are worse.)
For consistency, let $\VaR_1(X) \coloneqq \min\{v \in \Naturals_0 \mid X(v) > 0\}$.
Note that for $\threshold = 0$ we may have $\VaR_0(X) = \infty$.
\begin{example} \label{ex:var_fig}
	Consider the distribution from \cref{fig:cvar_example}, i.e.\ $X = \{2 \mapsto 20\%, 5 \mapsto 35\%, 7 \mapsto 25\%, 8 \mapsto 5\%, 9 \mapsto 15\%\}$.
	We see that $\VaR_{40\%}(X) = 7$.
	Maybe unexpectedly, we have $\VaR_{45\%}(X) = 5$ instead of $7$, as $\Probability[X > 5]$ is exactly $45\%$.
	It is a matter of preference how to define this boundary case, and either works in our setting.
	In particular, it does not influence the definition of CVaR.
\end{example}
CVaR, also parametrized by a threshold $\threshold \in [0, 1]$, aims to answers the question \enquote{what can we expect from an average bad case?}
Formally, CVaR equals the expectation of $X$ conditional on only considering the worst $\threshold$ outcomes.
Similar to \cref{ex:var_fig}, we need to apply special care when working with discrete distributions:
Again recall the example from \cref{fig:cvar_example} with $\threshold = 40\%$.
There, only $20\%$ of the $X(7) = 25\%$ should be considered.
Thus, CVaR is defined as follow.
For a distribution $X$ and threshold $\threshold > 0$, define $v \coloneqq \VaR_\threshold(X)$ and $\mathfrak{V} \coloneqq X > v$ the event of an outcome being strictly worse than the VaR.
Then
\begin{equation}
	\CVaR_\threshold(X) \coloneqq \tfrac{1}{\threshold} \big(\Probability[\mathfrak{V}] \cdot \Expectation[X \mid \mathfrak{V}] + \\ (\threshold - \Probability[\mathfrak{V}]) \cdot v \big). \label{eq:cvar_definition}
\end{equation}
For the degenerate case of $\threshold = 0$, we define $\CVaR_0(X) \coloneqq \lim_{\threshold \to 0} \CVaR_\threshold(X) = \VaR_0(X)$.
\begin{remark}
	Observe that $\CVaR_0(X)$ is the worst-case of $X$ and $\CVaR_1(X) = \Expectation[X]$ the expectation of $X$; changing $\threshold$ thus \emph{smoothly interpolates} between these extremes.
	As both of these extremal cases are already solved for SSP, we exclude them, i.e.\ assume $0 < \threshold < 1$.
\end{remark}
See e.g.\ \cite[Sec.~3]{DBLP:conf/lics/KretinskyM18} for a more detailed discussion of CVaR on discrete distributions.

\paragraph{Problem Statement}
Together, the central question considered in this work is:
\begin{quote}
	\emph{Given an SSP problem, what is the optimal CVaR?}
\end{quote}
Formally, given an MDP $\MDP$, cost function $\actioncost$, and threshold $0 < \threshold < 1$, we want to determine
\begin{equation*}
	\CVaR_\threshold^* \coloneqq {\inf}_{\strategy \in \Strategies} \CVaR_\threshold(\totalsum<\initialstate, \strategy>).
\end{equation*}
We refer to this problem as \textbf{CVaR-SSP}.
%

\paragraph{Linear Programming (LP)} (see e.g.\ \cite{DBLP:books/daglib/0090562}) is an established problem solving technique with strong connections to MDP; many popular objectives allow for a natural LP formulation.
An LP is characterized by a linear \emph{objective function} $f$ and a set of \emph{linear inequality constraints} on its variables. 
The task then is to find the maximal (or minimal) value of $f$ subject to the imposed constraints. 
This value can be computed in polynomial time \cite{khachiyan1979polynomial,DBLP:journals/combinatorica/Karmarkar84}.
As such, LP is a popular tool to prove complexity bounds of many problems.

\paragraph{Value Iteration (VI)} \cite{bellman1966dynamic} is a popular practical approach to solve various questions related to MC and MDP, among others.
As the name suggests, one repeatedly applies an iteration operator to a value vector $v_i$ (typically one real value per state).
For example, the canonical value iteration for SSP starts with $v_0(s) = 0$ for all $s \in \States$ and then iterates
\begin{equation*}
	v_{i+1}(s) = {\min}_{a \in \Actions(s)} \actioncost(s, a) + {\sum}_{s' \in \States} \mdptransitions(s, a, s') \cdot v_i(s').
\end{equation*}
Under the mentioned assumptions, this iteration converges to the true value in the limit, with an exponential worst-case bound to reach a given precision.
In practice, VI typically performs very well, quite often outperforming LP approaches by a large margin.
A similar trend emerges for our approaches.

\paragraph{Assumptions}
Finally, we introduce several standard assumptions for \textbf{CVaR-SSP}.
First, we assume that $\initialstate \notin \goalset$ (otherwise the problem would be trivial) and that all goal states are absorbing.
Next follow two standard assumptions for SSP \cite{DBLP:books/lib/BertsekasT96}.
A policy is \emph{proper} if the probability of eventually reaching the goal from every state is $1$.
We assume that
	(i)~there exists a proper policy and
	(ii)~for every improper policy $\strategy$, $\expectedval<\strategy>(s)$ is infinite for at least one state $s$.
Finally, we assume that the cost of an action is 0 \emph{if and only if} the corresponding state is a goal state.
This assumption, also used in e.g.\ \cite{DBLP:journals/mor/Bonet07,DBLP:conf/icra/CarpinCP16}, is mainly introduced for simplicity, we briefly discuss later on how it can be lifted.
Note that (ii) follows from (i) and the latter assumption.
\section{Reachability \& Uniform Costs}

We restrict to a simpler setting to explain central insights more clearly.
Namely, we assume that costs are uniform, i.e.\ $\actioncost(s, a) = 1$ for all non-goal states.
The total cost $\totalsum<s, \strategy>$ now can be interpreted as \enquote{starting from state $s$ with policy $\strategy$, how many steps are needed to reach the goal?}
The VaR corresponds to the first step after which a fraction of at least $1 - \threshold$ of all executions (abbreviated by \enquote{$1 - \threshold$ executions} in the following) have reached a goal state; CVaR is the overall expected number of steps to reach the goal for the remaining $\threshold$ executions.
We discuss the general case afterwards.

\subsection{Markov Chains}

To get started, we first consider Markov chains.
Since MC are purely stochastic, our problem changes from optimization to computation.
For readability, we thus omit policies from superscripts such as $\totalsum<s, \strategy>$ and write $\totalsum<s>$ instead.

Recall that VaR is the first time step after which $1 - \threshold$ executions have reached the goal.
We can compute this step by iterating the transition relation of the MC, i.e.\ computing where the system is after $n$ steps.
This naturally also gives us the distribution of the remaining executions which have not yet reached the goal.
To obtain the CVaR, we then need to consider the expected cost to reach the goal for this remainder, i.e.\ the classical SSP value.


Formally, fix an MC $\MC$ and goal states $\goalset$.
Let $\nstepprob<n>(s) \coloneqq \Probability[s_n = s \mid \initialstate]$ the probability that the system is in state $s$ after $n$ steps and $\expectedcost(s) \coloneqq \Expectation[\totalsum<s>]$ the expected number of steps to reach a goal state starting in $s$.
We define $\nstepnongoal<n> \coloneqq 1 - \sum_{s \in \goalset} \nstepprob<n>(s)$ the probability of not having reached the goal state after $n$ steps.
Then, $\VaR_\threshold(\totalsum<\initialstate>)$ is the unique value $n$ such that $\nstepnongoal<n+1>{} < \threshold \leq \nstepnongoal<n>$.
By our assumptions, we have that $\nstepnongoal<n> \to 0$ for $n \to \infty$, consequently such an $n$ exists for every $\threshold > 0$.
Finally, let $\nstepexpectedcost<n> \coloneqq {\sum}_{s \in \States} \nstepprob<n>(s) \cdot \expectedcost(s)$ the expected time to reach the goal after $n$ steps.
Note that we can include goal states in the sum as $\expectedcost(s) = 0$ for all goal states and they are absorbing.
Moreover, $\Expectation[\totalsum<s> \mid \totalsum<s> > n] = n + \frac{1}{\nstepnongoal<n>} \nstepexpectedcost<n>$:
We deliberately define $\nstepexpectedcost<n>$ independent of the fraction of runs which already have reached the goal, thus the conditioning of CVaR requires re-weighting.

Together, we obtain an intuitive characterization of $\CVaR$ for SSP, which is the foundation for our solution approaches.
\begin{theorem} \label{stm:cvar_equation}
	For $\VaR_\threshold(\totalsum<\initialstate>) = n$, we have
	\begin{equation*}
		\CVaR_\threshold(\totalsum<\initialstate>) = n + \tfrac{1}{\threshold} \nstepexpectedcost<n>.
	\end{equation*}
\end{theorem}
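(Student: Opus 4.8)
The plan is to unfold the definition \eqref{eq:cvar_definition} of $\CVaR$ at the distribution $\totalsum<\initialstate>$ and substitute the two ingredients that the discussion preceding the theorem already supplies. In \eqref{eq:cvar_definition} we have $v = \VaR_\threshold(\totalsum<\initialstate>) = n$; write $\mathfrak{V}$ for the event $\totalsum<\initialstate> > n$. The first thing I would record is that $\Probability[\mathfrak{V}] = \nstepnongoal<n>$: since $\initialstate \notin \goalset$, all goal states are absorbing, and in this section every step from a non-goal state costs exactly $1$, the total cost $\totalsum<\initialstate>$ is precisely the first hitting time of $\goalset$, so the event $\{\totalsum<\initialstate> > n\}$ coincides with $\{s_n \notin \goalset\}$, whose probability is $1 - \sum_{s \in \goalset}\nstepprob<n>(s) = \nstepnongoal<n>$.

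The second ingredient is the identity $\nstepnongoal<n> \cdot \Expectation[\totalsum<\initialstate> \mid \mathfrak{V}] = n \cdot \nstepnongoal<n> + \nstepexpectedcost<n>$ (equivalently $\Expectation[\totalsum<\initialstate> \mid \mathfrak{V}] = n + \tfrac{1}{\nstepnongoal<n>}\nstepexpectedcost<n>$ whenever $\nstepnongoal<n> > 0$, as already noted before the theorem; writing it multiplicatively also dispatches the degenerate case $\nstepnongoal<n> = 0$, where both sides vanish). I would justify it through the Markov property: conditioning additionally on $\{s_n = s\}$ for a non-goal state $s$ (which already entails $\mathfrak{V}$), the prefix $s_0, \dots, s_{n-1}$ visits only non-goal states and hence carries cost exactly $n$, while the suffix $s_n, s_{n+1}, \dots$ is, by time-homogeneity, distributed as a fresh run from $s$ and therefore carries expected cost $\expectedcost(s)$, finite by the SSP assumptions. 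Since $\Probability[s_n = s \mid \mathfrak{V}] = \nstepprob<n>(s) / \nstepnongoal<n>$ for $s \notin \goalset$, averaging over $s$ gives $\Expectation[\totalsum<\initialstate> \mid \mathfrak{V}] = n + \tfrac{1}{\nstepnongoal<n>}\sum_{s \notin \goalset}\nstepprob<n>(s)\expectedcost(s)$, and because $\expectedcost(\cdot) = 0$ on the absorbing goal states the sum may be extended to all of $\States$, yielding $\nstepexpectedcost<n>$.

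Plugging both facts into \eqref{eq:cvar_definition} then finishes the argument:
\begin{align*}
	\CVaR_\threshold(\totalsum<\initialstate>)
	&= \tfrac{1}{\threshold}\Bigl(\nstepnongoal<n> \cdot \Expectation[\totalsum<\initialstate> \mid \mathfrak{V}] + (\threshold - \nstepnongoal<n>) \cdot n\Bigr) \\
	&= \tfrac{1}{\threshold}\bigl(n \cdot \nstepnongoal<n> + \nstepexpectedcost<n> + \threshold \cdot n - \nstepnongoal<n> \cdot n\bigr)
	= n + \tfrac{1}{\threshold}\nstepexpectedcost<n>,
\end{align*}
the terms $n \cdot \nstepnongoal<n>$ cancelling. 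The only step that needs genuine care is this conditional-expectation identity: one must invoke the time-homogeneous Markov property cleanly to split a run at step $n$, and check that conditioning on $\mathfrak{V}$ amounts exactly to restricting the time-$n$ state distribution to the non-goal states and renormalising. The remaining manipulations are routine, and the extreme thresholds $\threshold \in \{0, 1\}$ are excluded by assumption, so \eqref{eq:cvar_definition} applies directly.
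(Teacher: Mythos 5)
Your proof is correct and follows essentially the same route as the paper's: both substitute $\Probability[\mathfrak{V}] = \nstepnongoal<n>$ and $\Expectation[\totalsum<\initialstate> \mid \mathfrak{V}] = n + \tfrac{1}{\nstepnongoal<n>}\nstepexpectedcost<n>$ into \cref{eq:cvar_definition} and simplify. The only difference is that you explicitly justify these two identities via the Markov property, whereas the paper establishes them in the discussion preceding the theorem and treats the proof as pure substitution.
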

\begin{proof}
	Inserting the above definitions in \cref{eq:cvar_definition} yields
	\begin{align*}
		\CVaR_\threshold(\totalsum<\initialstate>) & = \tfrac{1}{\threshold} \big( \nstepnongoal<n> \cdot (n + \tfrac{1}{\nstepnongoal<n>} \cdot \nstepexpectedcost<n>) + (\threshold - \nstepnongoal<n>) \cdot n \big) \\
			& = n + \tfrac{1}{\threshold} \nstepexpectedcost<n>. \qedhere
	\end{align*}
\end{proof}
This already yields an effective algorithm for MC:
We compute $\expectedcost$ using standard methods, iteratively compute $\nstepprob<n>$ for increasing $n$ to obtain $\VaR_\threshold(\totalsum<\initialstate>)$, and together get $\CVaR_\threshold(\totalsum<\initialstate>)$.
Unfortunately, $\VaR$ may be of exponential size.
\begin{lemma} \label{stm:var_mc_exponential}
	For every Markov chain $\MC$ we have that $\VaR_\threshold(\totalsum<\initialstate>) \in \mathcal{O}({-}\log \threshold \cdot \cardinality{\States} \cdot p_{\min}^{-\cardinality{\States}})$, where $p_{\min}$ is the minimal transition probability in $\MC$.
	This bound is tight.
\end{lemma}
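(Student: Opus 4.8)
The plan is to prove the two directions separately: an upper bound showing that the survival probability $\nstepnongoal<n>$ decays geometrically in blocks of $\cardinality{\States}$ steps at a rate governed by $p_{\min}^{\cardinality{\States}}$, and a family of Markov chains witnessing that the decay is essentially this slow.

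\textbf{Upper bound.} The key observation is a reachability fact: since a proper policy exists and a Markov chain has only one policy, the goal is reached with probability $1$ from every state, so from each state $s$ there is a simple path to $\goalset$ with at most $\cardinality{\States} - 1$ edges, each of probability at least $p_{\min}$. Hence $\Probability[\totalsum<s> \le \cardinality{\States}-1] \ge p_{\min}^{\cardinality{\States}-1}$, equivalently $\Probability[\totalsum<s> > \cardinality{\States}-1] \le 1 - p_{\min}^{\cardinality{\States}-1}$, \emph{uniformly} over all states $s$. Combining this with the Markov property applied at the times $0, \cardinality{\States}-1, 2(\cardinality{\States}-1), \dots$ and with the fact that $\goalset$ is absorbing gives, by induction on $k$,
\begin{equation*}
	\nstepnongoal<k(\cardinality{\States}-1)> \le \bigl(1 - p_{\min}^{\cardinality{\States}-1}\bigr)^{k}.
\end{equation*}
As recalled before \cref{stm:cvar_equation}, $\nstepnongoal<n>$ is non-increasing with $\nstepnongoal<n> \to 0$, and $\VaR_\threshold(\totalsum<\initialstate>)$ is the largest $n$ with $\nstepnongoal<n> \ge \threshold$. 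Since $-\log(1-x) \ge x$, taking $k = \lceil -\log\threshold \cdot p_{\min}^{-(\cardinality{\States}-1)}\rceil + 1$ makes the right-hand side strictly below $\threshold$, so $\VaR_\threshold(\totalsum<\initialstate>) < k(\cardinality{\States}-1) = \mathcal{O}(-\log\threshold \cdot \cardinality{\States} \cdot p_{\min}^{-\cardinality{\States}})$; only elementary estimates for the logarithm enter here.

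\textbf{Tightness.} Fix $n \ge 2$ and $p \in (0, \tfrac12]$, set $\ell \coloneqq n - 1$, and consider the Markov chain with states $s_0, \dots, s_{\ell-1}$ and an absorbing goal $g$, initial state $s_0$, where $\mctransitions(s_i, s_{i+1}) = p$ (with $s_\ell \coloneqq g$) and $\mctransitions(s_i, s_0) = 1 - p$; here $\cardinality{\States} = n$, $p_{\min} = p$, and all SSP assumptions hold. To reach $g$ the chain must perform $\ell$ consecutive forward moves out of $s_0$, and any non-forward move resets it to $s_0$; analysing the position at each step (using that $s_i$ is reached only via $i$ forward moves from $s_0$, and that $s_0$ is entered with probability $1-p$ from every non-goal state) yields the closed linear recurrence
\begin{equation*}
	\nstepnongoal<N> = (1-p)\, {\sum}_{i=0}^{\ell-1} p^{i}\, \nstepnongoal<N-1-i> \qquad (N \ge \ell),
\end{equation*}
with $\nstepnongoal<0> = \dots = \nstepnongoal<\ell-1> = 1$. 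One then checks that $N \mapsto (1-p^{\ell})^{N}$ satisfies this recurrence with ``$\ge$'' in place of ``$=$'' — the required inequality reduces after simplification to $(1-p^{\ell})^{\ell} \ge 1 - p$, which follows from Bernoulli's inequality since $\ell\, p^{\ell-1} \le 1$ for $p \le \tfrac12$ — and dominates the initial values, so by strong induction $\nstepnongoal<N> \ge (1-p^{\ell})^{N}$ for all $N$. Hence the largest $N$ with $\nstepnongoal<N> \ge \threshold$, i.e.\ $\VaR_\threshold(\totalsum<\initialstate>)$, is at least $\lfloor \frac{-\log\threshold}{-\log(1-p^{\ell})}\rfloor \ge \tfrac12 (-\log\threshold)\, p^{-\ell}$ (using $-\log(1-x) \le 2x$ for $x \le \tfrac12$), which is $\Omega(-\log\threshold \cdot p_{\min}^{-(\cardinality{\States}-1)})$; thus the exponential dependence on $p_{\min}$ and the logarithmic dependence on $\threshold$ are unavoidable, matching the upper bound up to a factor polynomial in $\cardinality{\States}$.

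\textbf{Main obstacle.} The upper bound is routine. The delicate part is tightness: obtaining the logarithmic dependence on $\threshold$ (not merely exponential growth in $\cardinality{\States}$) requires controlling $\nstepnongoal<N>$ for the whole family precisely, which is exactly what the linear recurrence together with the sub-solution $(1-p_{\min}^{\cardinality{\States}-1})^N$ supplies; should verifying the sub-solution inequality prove awkward, one can instead argue via the dominant eigenvalue $1 - \Theta(p_{\min}^{\cardinality{\States}})$ of the sub-stochastic restriction of the chain to non-goal states, at the price of a less self-contained argument.
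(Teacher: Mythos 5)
Your proof is correct, and it reaches the same conclusion by a noticeably more self-contained route than the paper. For the upper bound, the paper simply invokes an external result (Lemma~5.1 of Br\'azdil et al.) giving $\nstepnongoal<n> \leq 2c^n$ with $c = \exp(-\cardinality{\States}^{-1} p_{\min}^{\cardinality{\States}})$ and solves for $n$; you instead prove the geometric decay from scratch via the standard simple-path argument, which is elementary and yields an even slightly better exponent $p_{\min}^{-(\cardinality{\States}-1)}$. For tightness, the paper reuses the upper gadget of its exponential-memory figure, where a failed attempt detours through a chain of $r$-states so that every \enquote{round} has fixed length $n+1$; this synchronization makes the analysis a one-line geometric decay over rounds, at the cost of roughly doubling the state count. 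Your gadget resets directly to $s_0$, which is leaner but forces you to control $\nstepnongoal<N>$ at all times via the renewal recurrence and a sub-solution; your verification of that step is sound (the first-return decomposition is exactly right, and the reduction to $(1-p^\ell)^\ell \geq 1-p$ checks out), though note that your phrasing \enquote{satisfies this recurrence with $\geq$ in place of $=$} literally asserts the inequality in the wrong direction -- what you need, and what your reduced inequality actually establishes, is that the right-hand side of the recurrence dominates $(1-p^\ell)^N$, so that the induction pushes the lower bound forward. Both constructions, like the paper's, match the upper bound only up to a gap in the exponent of $p_{\min}$ (yours $\cardinality{\States}-1$ versus the paper's roughly $\cardinality{\States}/2$), which is all the lemma's informal \enquote{tight} claims.
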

Thus, our algorithm is exponential.
However, for polynomial VaR the overall algorithm is polynomial, too, since $\expectedcost$ can be computed in polynomial time.
For practical purposes, we can additionally exploit that $\nstepnongoal<n>$ is monotone in $n$ and employ binary search together with exponentiation by squaring.
\begin{lemma} \label{stm:cvar_ssp_polynomial}
	On Markov chains, \textbf{CVaR-SSP} can be solved using polynomially many arithmetic operations.
\end{lemma}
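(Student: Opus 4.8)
The plan is to turn \cref{stm:cvar_equation} into an algorithm and then bound the number of arithmetic operations it performs. By that theorem, $\CVaR_\threshold(\totalsum<\initialstate>) = n + \tfrac{1}{\threshold}\nstepexpectedcost<n>$ with $n = \VaR_\threshold(\totalsum<\initialstate>)$, so it suffices to (a)~compute the expected-cost vector $\expectedcost$, where $\expectedcost(s) = \Expectation[\totalsum<s>]$; (b)~determine the index $n$; and (c)~evaluate $\nstepexpectedcost<n> = \sum_{s\in\States}\nstepprob<n>(s)\cdot\expectedcost(s)$ and return $n + \tfrac{1}{\threshold}\nstepexpectedcost<n>$. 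For (a), $\expectedcost$ is the unique finite solution (under the standard SSP assumptions) of the linear system $\expectedcost(s) = 1 + \sum_{s'\in\States}\mctransitions(s,s')\cdot\expectedcost(s')$ for $s\notin\goalset$ together with $\expectedcost(s) = 0$ for $s\in\goalset$, which Gaussian elimination solves with $\mathcal{O}(\cardinality{\States}^3)$ arithmetic operations; this is also the remark made after \cref{stm:cvar_equation}. Step (c) needs $\mathcal{O}(\cardinality{\States})$ further operations once $n$ and $\nstepprob<n>$ are at hand, so the crux is step (b).

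For (b) I would exploit monotonicity. Since goal states are absorbing, the probability $\nstepnongoal<n>$ of not yet having reached $\goalset$ is non-increasing in $n$, hence the predicate $P(n) \coloneqq [\nstepnongoal<n> \geq \threshold]$ is monotone: it holds at $n = 0$ (because $\initialstate\notin\goalset$ gives $\nstepnongoal<0> = 1 \geq \threshold$) and fails for large $n$ (because $\nstepnongoal<n>\to 0 < \threshold$ by our assumptions), and $\VaR_\threshold(\totalsum<\initialstate>)$ is exactly the largest $n$ with $P(n)$ true. By \cref{stm:var_mc_exponential} this index is at most an explicit $N = \mathcal{O}(-\log\threshold \cdot \cardinality{\States}\cdot p_{\min}^{-\cardinality{\States}})$, so we binary-search for it in $\{0,\dots,N\}$ using $\mathcal{O}(\log N)$ probes, and $\log N$ is polynomial in the input size (which encodes $\threshold$ and all transition probabilities in binary).

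Each probe must evaluate $\nstepnongoal<m> = 1 - \sum_{s\in\goalset}\nstepprob<m>(s)$ for a candidate index $m \leq N$, and once $n$ is found we likewise need $\nstepprob<n>$ for step~(c); both are read off from the $m$-step matrix $P^m$ of the row-stochastic transition matrix $P$ of $\MC$, with $\nstepprob<m>(\cdot)$ being the row of $P^m$ indexed by $\initialstate$, and $P^m$ is obtained by exponentiation by squaring in $\mathcal{O}(\log N)$ matrix multiplications (as $m\leq N$), $\mathcal{O}(\cardinality{\States}^3)$ operations each. Thus a probe costs $\mathcal{O}(\cardinality{\States}^3\log N)$ operations and the binary search $\mathcal{O}(\cardinality{\States}^3(\log N)^2)$, which together with (a) and (c) stays polynomial. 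The main obstacle is exactly the tension that forces the statement to count arithmetic operations rather than bit operations: $\VaR$ can be exponential and the entries of $P^m$ can carry exponentially many bits (denominators up to $p_{\min}^{-m}$), yet charging unit cost per arithmetic operation renders this harmless, and the exponential range of $n$ is tamed by combining the a priori bound of \cref{stm:var_mc_exponential}, monotonicity of $n \mapsto \nstepnongoal<n>$ (enabling binary search), and fast matrix exponentiation; the one remaining subtlety, the boundary $\nstepnongoal<n+1> < \threshold \leq \nstepnongoal<n>$ in the definition of $\VaR$, is just the standard ``largest true index'' query for the monotone predicate $P(\cdot)$.
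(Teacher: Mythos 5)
Your proposal is correct and follows essentially the same route as the paper's proof: compute $\expectedcost$ by solving a linear system, locate $\VaR_\threshold$ by exploiting monotonicity of $\nstepnongoal<n>$ via binary search over the exponential range guaranteed by \cref{stm:var_mc_exponential} using repeated squaring of the transition matrix, and combine via \cref{stm:cvar_equation}. The only (immaterial) difference is that the paper reuses the already-computed powers $P_k$ during the binary search, giving $\mathcal{O}(\log N)$ matrix multiplications total rather than your $\mathcal{O}((\log N)^2)$; both are polynomial, and you correctly flag the same caveat the paper does about entry bit-lengths growing exponentially.
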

Note that the overall runtime of this algorithm still is exponential, since multiplication itself is not a constant time operation.
In practice, our algorithm can benefit from efficient matrix-multiplication methods and fixed-point arithmetic.

\subsection{Markov Decision Processes}

Now, we move our focus from MC to MDP.
We can re-use some of the observations from the previous section, however the addition of non-determinism complicates the problem significantly.
In particular, there is no unique distribution $\nstepprob<n>$, it rather depends on the chosen policy $\strategy$, which we denote by $\nstepprob<n><\strategy>$.
Analogously, we write $\nstepnongoal<n><\strategy>$ and $\nstepexpectedcost<n><\strategy>$ for the respective values achieved by a given policy $\strategy$ and abbreviate $\CVaR_\threshold(\strategy) \coloneqq \CVaR_\threshold(\totalsum<\initialstate, \strategy>)$ (analogous for $\VaR_\threshold(\strategy)$).
Finally, we write $\expectedcost(s)$ for the optimal expected cost to reach the target starting in $s$, i.e.\ the classical SSP value.

Note that we may have $\VaR_\threshold(\strategy) = \CVaR_\threshold(\strategy) = \infty$ for some $\strategy$.
However, under every \emph{proper} policy $\strategy^p$ the goal is reached within finite time with probability $1$.
Thus $\VaR_\threshold(\strategy^p)$ is bounded, \cref{stm:cvar_equation} remains applicable (by considering the induced MC), and $\CVaR_\threshold(\strategy^p) = n + \frac{1}{\threshold} \nstepexpectedcost<n><\strategy^p> < \infty$ for $n = \VaR_\threshold(\strategy^p)$.
Consequently, the optimal value is finite.

Before diving deeper into the solution concepts, we first prove that an optimal policy always exists.
\begin{theorem} \label{stm:cvar_optimal_policy_exists}
	We have $\CVaR_\threshold^* = \min_{\strategy \in \Strategies} \CVaR_\threshold(\strategy)$.
\end{theorem}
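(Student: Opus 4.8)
The statement asserts that the infimum in the definition of $\CVaR_\threshold^*$ is attained. The natural approach is a compactness argument on a suitable class of policies, combined with continuity of the map $\strategy \mapsto \CVaR_\threshold(\strategy)$. The main subtlety is that policies in full generality (history-dependent, randomized) do not form a nice compact space directly, and $\CVaR$ depends on the *entire* distribution $\totalsum<\initialstate,\strategy>$, which is a priori infinite-dimensional. So the first step is to reduce to a tractable subclass.

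\medskip

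\textbf{Step 1: Reduce to finitely-parametrized policies.}
The key observation from \cref{stm:cvar_equation} is that $\CVaR_\threshold(\strategy)$ only depends on $\strategy$ through the quantities $n = \VaR_\threshold(\strategy)$ and $\nstepexpectedcost<n><\strategy>$, and more precisely through the step-$n$ state distribution $\nstepprob<n><\strategy>$ together with the optimal continuation values $\expectedcost(\cdot)$. I would argue that for the purpose of minimizing $\CVaR_\threshold$ it suffices to consider policies that, for the first $n$ steps, are Markovian (depend only on the step counter and current state), and that from step $n$ onward switch to an optimal proper memoryless-deterministic SSP policy $\strategy^*$ (which exists by the standard SSP theory under our assumptions). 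Indeed, replacing the tail of any policy by $\strategy^*$ can only decrease $\nstepexpectedcost<n>$ while leaving $\nstepprob<n>$, hence $\nstepnongoal<m>$ for $m \le n$, hence $\VaR_\threshold$, unchanged; and among policies with a fixed such tail, only the step-distribution at each of the first $n$ steps matters, which is realized by some Markovian policy. (Here one must be a little careful: changing the first $n$ steps changes $\VaR_\threshold$ too, so the reduction is: for each candidate value $n$ of the VaR, optimize over Markovian-for-$n$-steps-then-$\strategy^*$ policies whose VaR is indeed $n$.)

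\medskip

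\textbf{Step 2: Bound the relevant horizon.}
By \cref{stm:var_mc_exponential} applied to the MC induced by any proper memoryless policy (whose minimal transition probability is a fixed constant of the MDP), $\VaR_\threshold(\strategy^p)$ is finite and bounded by an explicit constant $N$. Since the optimal value is finite (as noted just before the theorem), it suffices to minimize over policies with $\VaR_\threshold(\strategy) \le N$; one can make this precise by noting any policy with larger VaR has its first $N$ steps behaving, at step $N$, with $\nstepnongoal<N> \ge \threshold$, and bounding $\CVaR$ below in terms of the SSP value. So we only ever need the first $N$ steps of the policy.

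\medskip

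\textbf{Step 3: Compactness and continuity.}
A Markovian policy on the first $N$ steps is a point in the compact set $K \coloneqq \prod_{i=0}^{N-1} \prod_{s \in \States} \Distributions[\stateactions(s)]$ (a finite product of simplices). The map sending such a policy $\bar\strategy \in K$ to $\nstepprob<n><\bar\strategy>$ for each $n \le N$ is a composition of the (linear, hence continuous) one-step transition maps, so it is continuous; consequently $n \mapsto \nstepnongoal<n><\bar\strategy>$ and $\bar\strategy \mapsto \nstepexpectedcost<n><\bar\strategy>$ (the latter being $\sum_s \nstepprob<n><\bar\strategy>(s)\expectedcost(s)$, a fixed linear functional) are continuous. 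The function $\CVaR_\threshold$ expressed via \cref{stm:cvar_equation} is then piecewise of the form $n + \frac1\threshold \nstepexpectedcost<n>$ on the (closed) region where $\nstepnongoal<n+1> < \threshold \le \nstepnongoal<n>$ — I would partition $K$ into the finitely many closed sets $K_n \coloneqq \{\bar\strategy : \nstepnongoal<n+1><\bar\strategy> \le \threshold \le \nstepnongoal<n><\bar\strategy>\}$, minimize the continuous function $\bar\strategy \mapsto n + \frac1\threshold\nstepexpectedcost<n><\bar\strategy>$ over each compact $K_n$ (Weierstrass), and take the smallest of these finitely many minima. The boundary overlap between $K_n$ and $K_{n+1}$ is harmless precisely because, as observed in \cref{ex:var_fig} and the remark after \cref{eq:cvar_definition}, the value of $\CVaR$ does not depend on the tie-breaking convention for $\VaR$. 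Finally one lifts the minimizer $\bar\strategy \in K$ back to a genuine policy in $\Strategies$ by appending the tail $\strategy^*$, and this policy attains $\CVaR_\threshold^*$.

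\medskip

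\textbf{Main obstacle.}
The delicate point is Step 1: justifying that history-dependence and the infinite tail genuinely do not help, i.e.\ that we lose nothing by restricting to "Markovian for $N$ steps, then optimal SSP policy." The tail part is essentially the classical SSP argument, but one has to phrase it carefully because we are minimizing $\nstepexpectedcost<n><\strategy>$ conditioned on the step-$n$ distribution, and because the choice of $n=\VaR_\threshold(\strategy)$ is itself policy-dependent. Getting the quantifier order right — "for each admissible horizon $n$, compactly optimize; then minimize over the finitely many $n \le N$" — is where the care is needed; the rest is routine compactness.
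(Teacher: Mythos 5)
Your proof is correct and follows essentially the same route as the paper's: stratify policies by their $\VaR$ value, bound the relevant horizon via the exponential VaR bound, and apply a compactness-plus-continuity argument on each stratum (the paper phrases this as taking an accumulation point of a witness sequence in the closed set $\Strategies_n$ of policies reaching the goal with probability at least $1-\threshold$ within $n$ steps). Your write-up is in fact more explicit than the paper's terse argument about the two points it leaves implicit — the reduction to a finite-dimensional compact policy space (Markovian for the first $N$ steps, then an optimal SSP tail) and the continuity of $\bar\strategy \mapsto \nstepexpectedcost<n><\bar\strategy>$ — so no gap to report.
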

As a naive approach, one thus could try to enumerate all possible policies and apply the reasoning of the previous section.
However, even when only considering memoryless deterministic policies, there may be exponentially many distributions $\nstepprob<n><\strategy>$.
Even worse, the following example shows that optimal policies may require \emph{exponential memory}.
This suggests that enumeration approaches such as policy iteration (another popular approach to solve problems on MDP), or a simple value iteration likely are bound to fail, since both of them typically work with local, \enquote{memoryless} values.
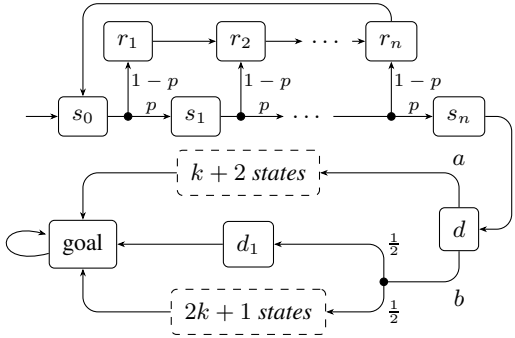
\begin{figure}[t]
	\centering
	\small
	\begin{tikzpicture}[auto,initial text=]
		\node[state,initial left] at (0,0) (init) {$s_0$};
		\node[actionnode] at (0.6,0) (s0a) {};
		\node[state] at (1.5,0) (s1) {$s_1$};
		\node[actionnode] at (2.1,0) (s1a) {};
		\node at (3,0) (dots) {$\dots$};
		\node[actionnode] at (4.1,0) (sdotsa) {};
		\node[state] at (5,0) (sn) {$s_n$};

		\node[state] at (0.6,1) (r1) {$r_1$};
		\node[state] at (2.1,1) (r2) {$r_2$};
		\node        at (3.25,1) (rdots) {$\dots$};
		\node[state] at (4.1,1) (rn) {$r_n$};

		\node[state] at (5,-1.5) (d) {$d$};
		\node[state,dashed] at (2.2,-0.75) (nbstates) {\emph{$k + 2$ states}};
		\node[actionnode] at (4,-2.2) (na) {};
		\node[state,dashed] at (2.2,-2.6) (namany) {\emph{$2k + 1$ states}};
		\node[state] at (2.2,-1.7) (na1) {$d_1$};

		\node[state] at (0,-1.7) (goal) {goal};

		\path[directedge]
			(r1) edge (r2)
			(r2) edge (rdots)
			(rdots) edge (rn)
		;
		\draw[directedge,rounded corners=2mm]
			(rn) -- ([shift={(0,0.5)}]rn.center) -| (init)
		;

		\draw[directedge,rounded corners=2mm]
			(sn) -- ([shift={(0.7,0.0)}]sn.center) |- (d)
		;
		\path[actionedge]
			(init) edge (s0a)
			(s1) edge (s1a)
			(dots) edge (sdotsa)
		;
		\draw[actionedge,rounded corners=2mm]
			(d) |- node[action,anchor=north] {$b$} (na);
		\draw[directedge,rounded corners=2mm]
			(d) |- node[action,anchor=south] {$a$} (nbstates);
		\draw[probedge,rounded corners=2mm]
			(na) |-  node[anchor=west,prob] {$\frac{1}{2}$} (na1);
		\draw[probedge,rounded corners=2mm]
			(na) |- node[anchor=west,prob] {$\frac{1}{2}$} (namany);

		\draw[directedge,rounded corners=2mm]
			(nbstates) -| (goal);
		\draw[directedge,rounded corners=2mm]
			(namany) -| (goal);
		\draw[directedge,rounded corners=2mm]
			(na1) -- (goal);

		\path[probedge]
			(s0a) edge node[prob] {$p$} (s1)
			(s0a) edge[swap,pos=0.6] node[prob] {$1 - p$} (r1)
			(s1a) edge node[prob] {$p$} (dots)
			(s1a) edge[swap,pos=0.6] node[prob] {$1 - p$} (r2)
			(sdotsa) edge node[prob] {$p$} (sn)
			(sdotsa) edge[swap,pos=0.6] node[prob] {$1 - p$} (rn)
		;

		\path[directedge]
			(goal) edge[loop left] (goal)
		;
	\end{tikzpicture}
	\caption{
		Exponential memory may be required.
		The upper part ensures that after $n + 1$ steps, a fraction of $p^n$ executions are in $s_n$ and the remaining $1 - p^n$ are in $s_0$.
		The lower part then comprises a choice between a safe option (action $a$) and a more risky but slightly more efficient option (action $b$).
	} \label{fig:exponential_memory}
\end{figure}
\begin{example}
	Consider the MDP in \cref{fig:exponential_memory} for any $n > 2k + 1$.
	In this case, every $i \cdot (n + 1)$ steps, a \enquote{packet} of $(1 - p^n)^i \cdot p^n$ executions arrives at $d$.
	The optimal choice in $d$ depends on the fraction of executions which are still in the upper part.
	If more than $1 - \threshold$ are still \enquote{on top}, the choice does not matter, since the current execution will surely reach the goal before the VaR, i.e.\ the packet is composed completely of \enquote{good} outcomes and not considered for CVaR.
	If more than $1 - \threshold$ executions already are in the lower part beyond $d$, the optimal choice is $b$, since the current packet only contains \enquote{bad} outcomes; only the expectation counts.
	However, for some $i^*$, the current packet contains exactly those executions which are at the $1 - \threshold$ boundary, i.e.\ containing both good and bad ones.
	Then (for appropriate $n$ and $p$) the optimal choice is action $a$.
	For example, if the current packet is composed of exactly 50\% good and 50\% bad, the expected performance of the bad fraction under $a$ is $k + 2$ compared to $2k + 1$ under $b$.
	One can directly show that the step corresponding to $i^*$ is exponential and thus the policy requires as much memory.
\end{example}
\begin{remark}
	The example above shows that exponential memory is required.
	However, it does not prove that randomization is needed, and we have not found an example where this would be the case.
	We conjecture that deterministic policies actually are sufficient and leave this question for future work.
\end{remark}

Despite the exponential memory requirement, we are able to derive two practical solution techniques, which we explain in the following.
First, we again observe that once the VaR is reached, i.e.\ the system has performed $\VaR_\threshold(\strategy) = n$ steps, we are only interested in the expectation:
Exactly those executions which have not reached the goal after $n$ steps are considered in the expectation computation of CVaR.
\begin{lemma} \label{stm:memoryless_after_var}
	Fix a policy $\strategy$ and let $n = \VaR_\threshold(\strategy)$.
	Then, there exists a policy $\strategy'$ which is stationary after $n$ steps and $\CVaR_\threshold(\strategy') \leq \CVaR_\threshold(\strategy)$.
\end{lemma}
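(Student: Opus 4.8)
The plan is to change $\strategy$ only from step $n$ on. Concretely, let $\strategy^\star$ be an optimal \emph{memoryless deterministic} policy for the (expectation) SSP problem -- one exists under our standard assumptions and attains exactly $\expectedcost(s)$ from every state $s$ -- and define $\strategy'$ to make the same decisions as $\strategy$ on all finite paths of length $< n$, and to play $\strategy^\star(\last{\finitepath})$ on every finite path $\finitepath$ of length $\ge n$. Then $\strategy'$ is, by construction, stationary after $n$ steps, so the entire content of the lemma is the inequality $\CVaR_\threshold(\strategy') \le \CVaR_\threshold(\strategy)$ (and we may assume $\CVaR_\threshold(\strategy) < \infty$, hence $n < \infty$ and \cref{stm:cvar_equation} applies to $\strategy$, as otherwise there is nothing to prove).

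First I would collect the easy consequences of the construction. Since the first $n$ decisions of $\strategy$ and $\strategy'$ coincide, the state distributions after $n$ steps agree, $\nstepprob<n><\strategy'> = \nstepprob<n><\strategy>$, and likewise $\nstepnongoal<k><\strategy'> = \nstepnongoal<k><\strategy>$ for all $k \le n$; in particular $\nstepnongoal<n><\strategy'> = \nstepnongoal<n><\strategy> \ge \threshold$, using $\VaR_\threshold(\strategy) = n$. Writing $m \coloneqq \VaR_\threshold(\strategy')$ -- which is finite because $\strategy'$ eventually follows the proper policy $\strategy^\star$ -- monotonicity of $k \mapsto \nstepnongoal<k><\strategy'>$ forces $m \ge n$. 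Applying \cref{stm:cvar_equation} to the Markov chains induced by $\strategy$ and by $\strategy'$ then gives $\CVaR_\threshold(\strategy) = n + \tfrac1\threshold\nstepexpectedcost<n><\strategy>$ and $\CVaR_\threshold(\strategy') = m + \tfrac1\threshold\nstepexpectedcost<m><\strategy'>$.

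The rest reduces to two inequalities. (i) $\nstepexpectedcost<n><\strategy'> \le \nstepexpectedcost<n><\strategy>$: we have $\nstepexpectedcost<n><\strategy'> = \sum_{s} \nstepprob<n><\strategy'>(s)\cdot\expectedcost(s) = \sum_s \nstepprob<n><\strategy>(s)\cdot\expectedcost(s)$ because $\strategy'$ behaves optimally from step $n$ on, whereas $\nstepexpectedcost<n><\strategy>$ is the \emph{actual} expected cost-to-go from step $n$ under $\strategy$'s (possibly history-dependent) continuation, which from each state is at least the optimal value $\expectedcost(s)$. (ii) $m + \tfrac1\threshold\nstepexpectedcost<m><\strategy'> \le n + \tfrac1\threshold\nstepexpectedcost<n><\strategy'>$: in the uniform-cost setting $\nstepexpectedcost<k><\strategy'>$ is the expected number of steps taken after time $k$, so $\nstepexpectedcost<k><\strategy'> - \nstepexpectedcost<k+1><\strategy'> = \Probability[\totalsum<\initialstate,\strategy'> > k] = \nstepnongoal<k><\strategy'>$, and hence $g(k) \coloneqq k + \tfrac1\threshold\nstepexpectedcost<k><\strategy'>$ satisfies $g(k+1) - g(k) = 1 - \tfrac1\threshold\nstepnongoal<k><\strategy'> \le 0$ for all $k < m$ (since $\nstepnongoal<k><\strategy'> \ge \nstepnongoal<m><\strategy'> \ge \threshold$ by monotonicity and the definition of $m$). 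Thus $g$ is non-increasing on $\{n, \dots, m\}$, and chaining everything yields $\CVaR_\threshold(\strategy') = g(m) \le g(n) = n + \tfrac1\threshold\nstepexpectedcost<n><\strategy'> \le n + \tfrac1\threshold\nstepexpectedcost<n><\strategy> = \CVaR_\threshold(\strategy)$.

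The step I expect to demand the most care -- and the reason one cannot simply quote \cref{stm:cvar_equation} at the single index $n$ -- is inequality (ii): switching the policy at step $n$ can raise $\nstepnongoal<n+1><\strategy'>$ and thereby push $\VaR_\threshold(\strategy')$ strictly above $n$, so one genuinely must show that re-anchoring \cref{stm:cvar_equation} from the ``time-$n$ cut'' to the true VaR $m$ of $\strategy'$ does not increase the value. The telescoping identity for $\nstepexpectedcost<k><\strategy'>$ is precisely what makes this transparent; everything else -- the path-measure bookkeeping in step (i) and the existence of an optimal stationary SSP policy -- is routine.
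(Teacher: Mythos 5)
Your construction is exactly the paper's: copy $\strategy$ for the first $n$ steps, switch to an optimal stationary SSP policy afterwards, and compare via \cref{stm:cvar_equation}. The one point of divergence is your step (ii): the paper's proof asserts $\VaR_\threshold(\strategy') = \VaR_\threshold(\strategy)$ directly from the equality of the step-$n$ distributions and applies \cref{stm:cvar_equation} at index $n$, whereas you correctly note that switching policies at step $n$ can push $\VaR_\threshold(\strategy')$ strictly above $n$, and you close this gap with the telescoping identity $\nstepexpectedcost<k><\strategy'> - \nstepexpectedcost<k+1><\strategy'> = \nstepnongoal<k><\strategy'>$, which shows $k \mapsto k + \tfrac{1}{\threshold}\nstepexpectedcost<k><\strategy'>$ is non-increasing up to the true VaR of $\strategy'$. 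Your argument is correct and in fact supplies a justification that the paper's own proof leaves implicit.
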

So, intuitively, as before in the Markov chain case, after $n$ steps we are only interested in the optimal expected time $\expectedcost(s)$.
We can use standard methods to compute this value (and a witness policy) in polynomial time \cite[Chp.~7]{DBLP:books/wi/Puterman94}.
This suggests that our task decomposes into two sub-problems, namely
	(i)~reaching the goal quickly, resulting in a small VaR, and
	(ii)~optimally distributing the remaining executions, i.e.\ states with a small expected time to reach the goal $\expectedcost(s)$.
One might feel tempted to first minimize the VaR and then, among \enquote{VaR-optimal} policies, choose one with optimal expected value on the remaining $\threshold$ executions.
However, trying to achieve a VaR as small as possible at all costs may actually come with a significantly larger \enquote{tail} of the distribution -- which is one of the main reasons why VaR is \enquote{seductive, but dangerous}.
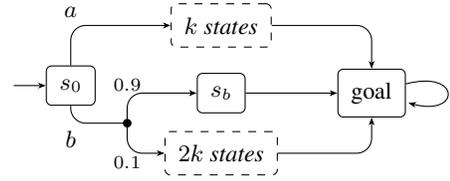
\begin{figure}
	\centering
	\small
	\begin{tikzpicture}[auto,initial text=]
		\node[state,initial] at (0,0) (s0) {$s_0$};
		\node[state,dashed] at (2,0.8) (astates) {\emph{$k$ states}};
		\node[state] at (2,-0.1) (bstate1) {$s_b$};
		\node[state,dashed] at (2,-0.9) (bstates2) {\emph{$2k$ states}};
		\node[actionnode] at (0.75,-0.5) (s0b) {};
		\node[state] at (4,-0.1) (goal) {goal};

		\draw[directedge,rounded corners=2mm]
			(s0) |- node[action,anchor=south] {$a$} (astates);
		\draw[actionedge,rounded corners=2mm]
			(s0) |- node[action,anchor=north] {$b$} (s0b);
		\draw[probedge,rounded corners=2mm]
			(s0b) |- node[prob,anchor=south] {$0.9$} (bstate1);
		\draw[probedge,rounded corners=2mm]
			(s0b) |- node[prob,anchor=north] {$0.1$} (bstates2);
		
		\draw[directedge,rounded corners=2mm]
			(astates) -| (goal);
		\draw[directedge]
			(bstate1) -- (goal);
		\draw[directedge,rounded corners=2mm]
			(bstates2) -| (goal);

		\path[directedge]
			(goal) edge[loop right] (goal)
		;
	\end{tikzpicture}
	\caption{
		VaR optimization is suboptimal.
	} \label{fig:var_optimization}
\end{figure}
\begin{example} \label{ex:minimizing_var_is_bad}
	Consider the MDP in \cref{fig:var_optimization} with $\threshold = 0.15$.
	Action $b$ is strictly preferred both for expectation as well as VaR optimization, while action $a$ is CVaR-optimal.
\end{example}
More strikingly, observe that action $b$ in the example is VaR optimal even if instead of $2k$ states there would be \emph{arbitrarily many}.
So, in a sense, VaR does not consider the $\threshold$ worst outcomes but rather the $1 - \threshold$ best.
By optimizing the first $1 - \threshold$ executions, the remaining portion may be positioned disproportionately bad.
Instead, we have to balance between issues (i) and (ii), and search for a \emph{trade-off} between a small VaR and the distribution of the remaining executions.
We present two different approaches to find this trade-off, based on linear programming and value iteration, respectively.
\subsection{Linear Programming}
\begin{figure}[t]
	\begin{gather*}
		\min {\sum}_{s \in \States} p_{s,n} \cdot \expectedcost(s) \text{ subject to} \\[0.5em]
		\text{All variables non-negative} \\[0.5em]
		p_{\initialstate,0} = 1 \qquad p_{s,0} = 0 \quad \forall s \in \States, s \neq \initialstate \\
		\begin{aligned}
			p_{s,i} & = {\sum}_{a \in \stateactions(s)} p_{s,a,i} \quad \forall s \in \States, i < n \\
			p_{s',i + 1} & = {\sum}_{s \in \States, a \in \stateactions(s)} p_{s,a,i} \cdot \mdptransitions(s, a, s') \quad \forall s' \in \States, i < n \\
		\end{aligned} \\[0.5em]
		{\sum}_{s \in \goalset} p_{s,n - 1} \leq 1 - \threshold \leq {\sum}_{s \in \goalset} p_{s,n}
	\end{gather*} %
	\caption{LP to compute $\CVaR_\threshold$ given VaR guess $n$.} \label{fig:cvar_lp}
\end{figure}
For our linear programming approach, first assume that we are magically given the optimal VaR $n$, i.e.\ the VaR which allows us to obtain the optimal CVaR.
Inspired by \cite[Fig.~3]{DBLP:journals/lmcs/ChatterjeeKK17}, we can construct an LP of size linear in $n$, where the set of solutions corresponds to all policies achieving this VaR, shown in \cref{fig:cvar_lp}.
This LP then computes the minimal CVaR over these policies.
Intuitively, the LP is obtained by \emph{unrolling} the MDP until step $n$ and building reachability constraints on that MDP.
\begin{theorem} \label{stm:cvar_lp_solution}
	If there exists a policy $\strategy$ such that $\VaR_\threshold(\strategy) = n$ and $\CVaR_\threshold(\strategy) = C$, the LP in \cref{fig:cvar_lp} has a solution with value at most $\threshold \cdot (C - n)$.
	If the LP in \cref{fig:cvar_lp} has a solution for some $n$ with value $E$, there exists a policy achieving $\CVaR_\threshold(\strategy) = n + \frac{1}{\threshold} E$.
\end{theorem}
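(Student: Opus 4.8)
The plan is to prove the two implications separately, in each case exhibiting the witness (an LP point in one direction, a policy in the other) explicitly and then invoking \cref{stm:cvar_equation}. Recall that the unrolled LP merely encodes the first $n$ steps of the MDP, so its variables are meant to be $p_{s,i} = \nstepprob<i><\strategy>(s)$ and $p_{s,a,i}$ the corresponding state--action masses, and the threshold constraints pin down $\VaR_\threshold(\strategy) = n$.

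For the first implication, suppose $\strategy$ attains $\VaR_\threshold(\strategy) = n$ and $\CVaR_\threshold(\strategy) = C$. I would set $p_{s,i} \coloneqq \nstepprob<i><\strategy>(s)$ for $0 \le i \le n$ and let $p_{s,a,i}$, for $i < n$, be the probability under $\strategy$ started in $\initialstate$ of being in $s$ and playing $a$ at step $i$, and then check LP-feasibility: non-negativity and the $i=0$ equations are immediate (since $\initialstate \notin \goalset$); $p_{s,i} = \sum_a p_{s,a,i}$ is the law of total probability; $p_{s',i+1} = \sum_{s,a} p_{s,a,i}\,\mdptransitions(s,a,s')$ is a single transition step; and, using that $\nstepnongoal<i><\strategy> = 1 - \sum_{s\in\goalset} p_{s,i}$ is the probability that $\strategy$ has not reached the goal within $i$ steps, the two threshold inequalities are precisely $\nstepnongoal<n><\strategy> \le \threshold \le \nstepnongoal<n-1><\strategy>$ -- the first holding by the definition of $\VaR_\threshold(\strategy) = n$, the second by minimality of $n$ (here $n \ge 1$ since $\threshold < 1$). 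For the objective, \cref{stm:cvar_equation} applied to the Markov chain induced by $\strategy$ gives $\threshold(C-n) = \nstepexpectedcost<n><\strategy>$; since $\nstepexpectedcost<n><\strategy>$ is the probability-weighted expected \emph{remaining} cost over the step-$n$ histories, while $\expectedcost(s)$ is the \emph{optimal} expected remaining cost from $s$, we obtain $\nstepexpectedcost<n><\strategy> \ge \sum_s \nstepprob<n><\strategy>(s)\,\expectedcost(s) = \sum_s p_{s,n}\,\expectedcost(s)$, which is exactly the LP objective at the constructed feasible point; hence the LP optimum is at most $\threshold(C-n)$.

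For the converse, given a feasible LP point with value $E$, I would build a policy $\strategy$ that, for the first $n$ steps, in state $s$ at step $i$ plays $a$ with probability $p_{s,a,i}/p_{s,i}$ whenever $p_{s,i} > 0$ (and arbitrarily otherwise, which is harmless since those $(s,i)$ carry zero probability), and that from step $n$ on follows a memoryless expectation-optimal SSP policy -- one exists under the standing assumptions and attains $\expectedcost(s)$ from every $s$. An induction on $i \le n$ using the transition constraints shows $\nstepprob<i><\strategy>(s) = p_{s,i}$, so the threshold inequalities give $\nstepnongoal<n><\strategy> \le \threshold \le \nstepnongoal<n-1><\strategy>$ and thus $\VaR_\threshold(\strategy) = n$ (in the boundary case $\nstepnongoal<n-1><\strategy> = \threshold$ one might prefer to call the VaR $n-1$, but this yields the same CVaR, cf.\ the earlier remark). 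Since $\strategy$ is stationary after step $n$, the expected remaining cost from a step-$n$ history depends only on its last state and equals the optimal value there, so \cref{stm:cvar_equation} (for the induced chain) gives $\CVaR_\threshold(\strategy) = n + \tfrac{1}{\threshold}\sum_s \nstepprob<n><\strategy>(s)\,\expectedcost(s) = n + \tfrac{1}{\threshold}\sum_s p_{s,n}\,\expectedcost(s) = n + \tfrac{1}{\threshold}E$.

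I expect the main obstacle to be the interface between the LP -- whose objective is phrased with the \emph{optimal} expected costs $\expectedcost(s)$ rather than those realized by the policy at hand -- and general, history-dependent policies: one direction needs the lower bound that any realized remaining cost dominates $\expectedcost$, and the other needs that the post-VaR tail of a policy can be taken memoryless and optimal without disturbing the step-$n$ distribution (in the spirit of \cref{stm:memoryless_after_var}). The remaining ingredients -- applicability of \cref{stm:cvar_equation} to the possibly infinite induced chain, and the off-by-one in the VaR boundary convention -- are routine and touch only measure-zero edge cases.
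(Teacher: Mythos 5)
Your proposal matches the paper's own proof essentially step for step: the same LP assignment $p_{s,i}=\nstepprob<i><\strategy>(s)$, $p_{s,a,i}=\Probability[s_i=s,a_i=a\mid\initialstate,\strategy]$ in one direction, the same normalized-Markovian-prefix-plus-optimal-SSP-tail policy in the other, the same invocation of \cref{stm:cvar_equation} on the induced chain, and the same case split at the $\sum_{s\in\goalset}p_{s,n-1}=1-\threshold$ boundary. If anything you are slightly more careful than the paper in the first direction, where you correctly use $\nstepexpectedcost<n><\strategy>\geq\sum_s p_{s,n}\,\expectedcost(s)$ (the paper writes equality, which only matches the \enquote{at most} in the statement because $\expectedcost$ is the optimal value).
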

\begin{proof}[Prook sketch (see Appendix)]
	\underline{First part}:
	We construct an assignment to the LP's variables:
	Set $p_{s, i} = \nstepprob<i><\strategy>(s)$ and $p_{s, a, i} = \Probability[s_i = s, a_i = a \mid \initialstate, \strategy]$.
	This assignment satisfies the first three constraints of the LP.
	For the fourth constraint, observe that by $\VaR_\threshold(\strategy) = n$, we have that $\sum_{s \in \goalset} \nstepprob<n - 1><\strategy>(s) < 1 - \threshold \leq \sum_{s \in \goalset} \nstepprob<n><\strategy>(s)$.
	By \cref{stm:cvar_equation}, we have that $C = n + \tfrac{1}{\threshold} \nstepexpectedcost<n><\strategy>$.
	Since $\nstepexpectedcost<n><\strategy> = \sum_{s \in \States} \nstepprob<n><\strategy>(s) \cdot \expectedcost(s) = \sum_{s \in \States} p_{s, n} \cdot \expectedcost(s)$, we get that $\nstepexpectedcost<n><\strategy> = \threshold \cdot (c - n)$, proving the claim.

	\underline{Second part}: We construct the policy $\strategy$ as follows.
	For the first $n$ steps, at step $i$ in state $s$, choose action $a$ with probability $p_{s, a, i}$.
	Afterwards, i.e.\ starting from step $n$, in state $s$ follow a policy achieving the optimal expected cost $\expectedcost(s)$ (note the similarity to \cref{stm:memoryless_after_var}).
	Clearly, $\nstepprob<i><\strategy>(s) = p_{s, i}$ and thus $v = \nstepexpectedcost<n><\strategy>$.
	We have $\VaR_\threshold(\strategy) = n - 1$ if $\sum_{s \in \goalset} p_{s, n - 1} = 1 - \threshold$ and $\VaR_\threshold(\strategy) = n$ otherwise.
	In both cases, we can prove $\nstepexpectedcost<n - 1><\strategy> = \threshold + \nstepexpectedcost<n><\strategy>$.
	Inserting yields $\CVaR_\threshold(\strategy) = (n - 1) + \frac{1}{\threshold} \nstepexpectedcost<n - 1><\strategy> = n + \frac{1}{\threshold} \nstepexpectedcost<n><\strategy> = n + \frac{1}{\threshold} v$.
\end{proof}
This directly suggests an algorithm:
We simply try each possible VaR value and solve the associated LP.
Unfortunately, as suggested by \cref{stm:var_mc_exponential}, the VaR obtained by CVaR optimal policies may be at least exponential.
We prove a matching upper bound in the following.
\begin{lemma} \label{stm:var_bound_from_cvar}
	Fix a policy $\strategy$ and let $\strategy^*$ be an optimal policy.
	Then $\VaR_\threshold(\strategy^*) \leq \CVaR_\threshold(\strategy)$.
\end{lemma}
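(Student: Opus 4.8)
The plan is to reduce the statement to the elementary inequality $\VaR_\threshold(X) \le \CVaR_\threshold(X)$, which holds for every cost distribution $X$, and then to invoke optimality of $\strategy^*$. Concretely, I would argue $\VaR_\threshold(\strategy^*) \le \CVaR_\threshold(\strategy^*) \le \CVaR_\threshold(\strategy)$, where the first step is the generic VaR/CVaR inequality applied to $X \coloneqq \totalsum<\initialstate, \strategy^*>$ and the second is just $\CVaR_\threshold(\strategy^*) = \CVaR_\threshold^* \le \CVaR_\threshold(\strategy)$.

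First I would dispose of the trivial case: if $\CVaR_\threshold(\strategy) = \infty$ there is nothing to prove, so assume $\CVaR_\threshold(\strategy) < \infty$. By optimality, $\CVaR_\threshold(\strategy^*) = \CVaR_\threshold^* \le \CVaR_\threshold(\strategy) < \infty$. A finite CVaR forces $\strategy^*$ to reach the goal with probability at least $1 - \threshold$: otherwise $\totalsum<\initialstate, \strategy^*>$ would place more than $\threshold$ mass on the ``goal never reached'' outcome, making both $\VaR_\threshold(\strategy^*)$ and hence $\CVaR_\threshold(\strategy^*)$ infinite. In particular $n^* \coloneqq \VaR_\threshold(\strategy^*)$ is finite, so the defining identity \eqref{eq:cvar_definition} applies to $X$.

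Next I would prove the generic inequality $\VaR_\threshold(X) \le \CVaR_\threshold(X)$ for any $X$ with finite VaR. Write $v \coloneqq \VaR_\threshold(X)$ and $\mathfrak{V} \coloneqq X > v$. Two observations suffice: (a) by the definition of $\VaR$, $\Probability[\mathfrak{V}] = \Probability[X > v] \le \threshold$, so the coefficient $\threshold - \Probability[\mathfrak{V}]$ appearing in \eqref{eq:cvar_definition} is non-negative; and (b) conditioned on $\mathfrak{V}$, every outcome of $X$ is (strictly) larger than $v$, hence $\Expectation[X \mid \mathfrak{V}] \ge v$ whenever $\Probability[\mathfrak{V}] > 0$ (and if $\Probability[\mathfrak{V}] = 0$ the first summand of \eqref{eq:cvar_definition} vanishes and $\CVaR_\threshold(X) = v$ outright). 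Substituting the bound from (b) into \eqref{eq:cvar_definition} and using (a),
\begin{equation*}
	\CVaR_\threshold(X) \ge \tfrac{1}{\threshold}\big(\Probability[\mathfrak{V}] \cdot v + (\threshold - \Probability[\mathfrak{V}]) \cdot v\big) = v = \VaR_\threshold(X).
\end{equation*}

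Combining everything, $\VaR_\threshold(\strategy^*) = \VaR_\threshold(X) \le \CVaR_\threshold(X) = \CVaR_\threshold(\strategy^*) \le \CVaR_\threshold(\strategy)$, which is exactly the claim. I do not expect a real obstacle: the argument is short and the VaR/CVaR inequality is essentially immediate from \eqref{eq:cvar_definition}. The only point needing mild care is the bookkeeping around infinite values (mass on ``goal never reached''), which is handled by the case split above together with the observation that a finite CVaR already forces a finite VaR.
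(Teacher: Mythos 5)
Your proof is correct and follows exactly the same route as the paper's: the chain $\VaR_\threshold(\strategy^*) \leq \CVaR_\threshold(\strategy^*) \leq \CVaR_\threshold(\strategy)$, with the first inequality from the definition of CVaR and the second from optimality. You merely spell out in more detail the generic inequality $\VaR_\threshold(X) \leq \CVaR_\threshold(X)$ and the infinite-value bookkeeping, which the paper dismisses as following ``from definition.''
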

Note that the VaR of CVaR optimal policies may not be optimal, i.e.\ potentially $\VaR_\threshold(\strategy^*) > \VaR_\threshold(\strategy)$, see \cref{ex:minimizing_var_is_bad}.

Furthermore, by employing our assumption that a proper policy exists and combining it with \cref{stm:var_mc_exponential}, we can find a policy with CVaR of bounded size.
\begin{lemma} \label{stm:mdp_cvar_exponential_bound}
	There is a policy with at most exponential CVaR.
\end{lemma}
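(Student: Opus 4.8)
The plan is to exhibit a concrete policy whose CVaR is bounded by an exponential expression, using the proper policy guaranteed by the assumptions together with the exponential VaR bound from \cref{stm:var_mc_exponential}. First I would fix a proper policy $\strategy^p$; by the standing assumptions such a policy exists. Since $\strategy^p$ is proper, the goal is reached with probability $1$, so the induced Markov chain $\MC^{\strategy^p}$ satisfies the hypotheses under which \cref{stm:cvar_equation} applies: writing $n = \VaR_\threshold(\strategy^p)$, we have $\CVaR_\threshold(\strategy^p) = n + \tfrac{1}{\threshold}\nstepexpectedcost<n><\strategy^p>$, and this quantity is finite. It remains to bound the two summands.

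For the first summand, I would apply \cref{stm:var_mc_exponential} to the induced chain $\MC^{\strategy^p}$: this gives $n = \VaR_\threshold(\strategy^p) \in \mathcal{O}\big({-}\log\threshold \cdot \cardinality{\States} \cdot p_{\min}^{-\cardinality{\States}}\big)$, an explicitly exponential bound. A small subtlety: \cref{stm:var_mc_exponential} is stated for a Markov chain with state set $\States$ and minimal transition probability $p_{\min}$, so I need the induced chain to have comparable parameters. If $\strategy^p$ is chosen memoryless (which is possible for proper policies optimizing expected cost, cf.\ \cite[Chp.~7]{DBLP:books/wi/Puterman94}), the induced chain has the same state set, and its minimal transition probability is bounded below by $p_{\min}$ of $\MDP$ (since transitions are exactly those of the chosen actions); hence the bound transfers with the same $p_{\min}$ and $\cardinality{\States}$.

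For the second summand, $\nstepexpectedcost<n><\strategy^p> = \sum_{s\in\States} \nstepprob<n><\strategy^p>(s)\cdot\expectedcost(s)$. Here $\nstepprob<n><\strategy^p>(s)\le 1$ and $\expectedcost(s)$ is the optimal expected cost to reach the goal from $s$, which is finite (a proper policy exists) and computable in polynomial time; in particular $\expectedcost(s)$ is bounded by a value of at most exponential bit-length — a standard bound is $\expectedcost(s) \le \max_s C_{\max}\cdot p_{\min}^{-\cardinality{\States}}$ type estimates, or one simply notes it is the solution of a polynomial-size linear system with entries of polynomial size, hence of exponential magnitude. Thus $\nstepexpectedcost<n><\strategy^p> \le \cardinality{\States}\cdot\max_s\expectedcost(s)$ is at most exponential, and $\tfrac{1}{\threshold}$ is a fixed constant. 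Adding the two bounds gives $\CVaR_\threshold(\strategy^p)$ at most exponential, and since $\CVaR_\threshold^* \le \CVaR_\threshold(\strategy^p)$, the claim follows.

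The main obstacle I anticipate is the bookkeeping around \emph{which} proper policy to use and making sure \cref{stm:var_mc_exponential} is legitimately applicable to its induced chain with the stated parameters — in particular, confirming that a memoryless proper policy can be chosen and that its induced chain inherits the $p_{\min}$ and $\cardinality{\States}$ bounds without blow-up. Everything else is routine: combining a finite VaR, a bounded $\expectedcost$, and the constant $\tfrac1\threshold$ via \cref{stm:cvar_equation}.
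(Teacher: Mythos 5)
Your proposal is correct and follows essentially the same route as the paper's proof: fix a proper \emph{memoryless deterministic} policy (whose existence the paper takes from \cite[Prop.~2]{DBLP:journals/mor/BertsekasT91}), apply \cref{stm:var_mc_exponential} to the induced chain to bound the VaR exponentially, bound $\expectedcost(s)$ similarly, and combine via \cref{stm:cvar_equation}. Your care about determinism being needed so that the induced chain's minimal transition probability does not drop below $p_{\min}$ is exactly the point the paper also makes explicit.
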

\Cref{stm:var_bound_from_cvar,stm:mdp_cvar_exponential_bound} together then yield the desired result. 
\begin{corollary} \label{stm:mdp_var_exponential_bound}
	$\VaR_\threshold(\strategy^*)$ is at most exponential.
\end{corollary}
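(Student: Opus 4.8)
The plan is to derive the bound by simply composing the two preceding lemmas; essentially no new argument is needed. First I would invoke \cref{stm:cvar_optimal_policy_exists} to fix an optimal policy $\strategy^*$, so that $\VaR_\threshold(\strategy^*)$ is a well-defined, finite quantity (recall that under a proper policy the VaR is finite, and an optimal policy attains the finite value $\CVaR_\threshold^*$). Next I would apply \cref{stm:mdp_cvar_exponential_bound} to obtain some, not necessarily optimal, policy $\strategy$ with $\CVaR_\threshold(\strategy)$ bounded by an explicit exponential expression in the size of $\MDP$; this policy exists by combining the assumed proper policy with the exponential VaR bound of \cref{stm:var_mc_exponential} and the characterization in \cref{stm:cvar_equation} (the expected-cost tail contributes only the optimal SSP value, which is itself at most exponential).

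With these two ingredients in hand, the corollary follows in one step: \cref{stm:var_bound_from_cvar}, applied to the cheap witness $\strategy$ and the optimal policy $\strategy^*$, gives $\VaR_\threshold(\strategy^*) \le \CVaR_\threshold(\strategy)$. Chaining this with the exponential bound on $\CVaR_\threshold(\strategy)$ yields that $\VaR_\threshold(\strategy^*)$ is at most exponential, as claimed.

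I do not expect a genuine obstacle here. The one point that deserves care is bookkeeping: one must make sure the bound produced by \cref{stm:mdp_cvar_exponential_bound} is \emph{uniform}, i.e.\ depends only on $\cardinality{\States}$, the minimal transition probability, the maximal cost, and $\threshold$, and not on any particular policy -- otherwise the inequality $\VaR_\threshold(\strategy^*) \le \CVaR_\threshold(\strategy)$ would not translate into a bound on the instance itself. Since \cref{stm:var_bound_from_cvar} is valid for \emph{every} policy $\strategy$, instantiating it with the cheap witness of \cref{stm:mdp_cvar_exponential_bound} is exactly the right move, and no further optimization over policies is involved.
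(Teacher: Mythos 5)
Your proposal is correct and matches the paper's own argument, which likewise obtains the corollary by chaining \cref{stm:var_bound_from_cvar} with the exponential-CVaR witness from \cref{stm:mdp_cvar_exponential_bound}. The uniformity point you flag is handled exactly as you suggest: the witness in \cref{stm:mdp_cvar_exponential_bound} is a proper memoryless deterministic policy whose bound depends only on the instance parameters, so no further care is needed.
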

This implies that our LP algorithm is EXPTIME: We solve exponentially many linear programs of exponential size.
However, \cref{stm:var_bound_from_cvar} also yields a dynamic \enquote{stopping criterion} for our algorithm:
We do not always need to try out all exponentially many possible values for VaR.
Instead, once we found a solution, we can use the CVaR obtained by this solution as new upper bound for the VaR guesses.
Thus, the exponential time solely depends on the magnitude of $\CVaR_\threshold(\strategy^*)$.
In particular, if $\CVaR_\threshold(\strategy^*)$ is of polynomial size, our algorithm is PTIME, since we can stop after polynomially many steps.

We conclude this section with a series of remarks, putting our results into context.
\begin{remark}
	We conjecture that these results are optimal, i.e.\ that \textbf{CVaR-SSP} is EXPTIME-complete.
	However, a proof of hardness seems surprisingly difficult:
	Only recently, \cite{DBLP:conf/icalp/BalajiK0PS19} proved that the (conceptually much simpler) problem of finite-horizon reachability is EXPTIME-complete; a question open for several decades, posed already by \cite{DBLP:journals/mor/PapadimitriouT87}.
	Observe that deciding whether a policy $\strategy$ exists such that $\VaR_\threshold(\strategy) \leq n$ is a special case of finite-horizon reachability.
	Unfortunately, the techniques of \cite{DBLP:conf/icalp/BalajiK0PS19} are not applicable to our case, since we assumed the existence of proper policies.
	We conjecture that the associated VaR problem nevertheless is EXPTIME-complete, too.
	Yet, even proving hardness of the VaR problem does \emph{not} immediately prove that \textbf{CVaR-SSP} itself is EXPTIME-complete:
	There might be an algorithm which can determine the CVaR without explicitly determining the VaR.
	It however seems unlikely that CVaR can be accurately computed without knowledge of VaR.
\end{remark}
\begin{remark}
	In \cite{DBLP:conf/lics/KretinskyM18}, the authors also employed a \enquote{VaR-guessing} approach for a similar problem.
	In their setting however, only linearly many possible value for VaR exist, thus yielding a polynomial algorithm.
	See \cite[Thm.~3.33]{piribauer2021} for a translation of our case to the scenario of \cite{DBLP:conf/lics/KretinskyM18} and an alternative proof for the exponential upper bound.
\end{remark}

\begin{remark}
	Our LP approach can be adapted to the constrained variant, i.e.\ answer the question \enquote{given that CVaR should be at least $x$, what is the maximal expectation?}, by changing the CVaR objective to an appropriate constraint and adding expectation maximization as objective.
\end{remark}
\subsection{Value Iteration}

While LP is appealing in theory due to its polynomial complexity, in practice it often is outperformed by approaches such as VI, despite worse theoretical complexity.
We now discuss several insights which ultimately lead to a VI algorithm for our problem, yielding precise results.
This is particularly intriguing since a VI approach for the similar scenario of \cite{DBLP:conf/lics/KretinskyM18} remains elusive.

To derive a VI approach, we require an iteration operation which yields a value for each state based on the values of their respective successors.
So, suppose we naively want to compute $\CVaR_\threshold$ of a state $s$ based on the $\CVaR_\threshold$ of its successors.
Unfortunately, we cannot simply combine the $\CVaR_\threshold$ of the successors:
For example, it might be the case that all \enquote{bad} outcomes (i.e.\ those worse than the $\VaR_\threshold$) are all those which move to one particular successor.
Hence, we would need $\CVaR_1$ for that successor and $\CVaR_0$ for all others.
Consequently, we need to employ a different approach.

By definition of $\CVaR_\threshold$, precisely a fraction of $\threshold$ executions starting in $s$ are bad ones, and these have to distribute \emph{somehow} over the successors.
Thus, there is a weighting of successors, reflecting how the bad executions distribute.
\begin{lemma} \label{stm:cvar_weight_decomposistion}
	Let $s$ be a state, $a \in \stateactions(s)$ an available action, and $\strategy$ a policy.
	There exist weights $w : \States \to [0, 1]$ such that $\sum_{s' \in \States} \mdptransitions(s, a, s') \cdot w(s') = \threshold$ and
	\begin{multline*}
		\CVaR_\threshold(\totalsum<s, \strategy>) = \\
		1 + \sum_{{a \in \stateactions(s), s' \in \States}} \strategy((s), a) \mdptransitions(s, a, s') \CVaR_{w(s')}(\totalsum<s', \strategy^a>),
	\end{multline*}
	where $(s)$ denotes a path comprising only $s$ and $\strategy^a$ denotes the policy $\strategy$ after taking action $a$.
\end{lemma}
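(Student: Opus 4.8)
The plan is to derive the identity from the \emph{dual (variational) characterisation} of $\CVaR$, applied after stripping off the deterministic unit-cost first step out of $s$. Recall (this is equivalent to \eqref{eq:cvar_definition}; cf.\ \cite{rockafellar2000optimization}) that for any cost distribution $X$ and $\threshold \in (0, 1)$,
\begin{equation*}
	\threshold \cdot \CVaR_\threshold(X) = \max\{\Expectation[X \cdot Z] \mid 0 \le Z \le 1, \ \Expectation[Z] = \threshold\},
\end{equation*}
and that a maximising \enquote{selector} $Z^\star$ can be taken \emph{monotone in $X$}: $Z^\star = 1$ on outcomes strictly worse than $\VaR_\threshold(X)$, $Z^\star$ equals a fixed constant $c \in [0, 1)$ on outcomes equal to $\VaR_\threshold(X)$, and $Z^\star = 0$ otherwise (with $\CVaR_0(X) = \VaR_0(X)$ corresponding to $Z^\star \equiv 0$). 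Informally, $Z^\star$ records which fraction of each outcome is one of the \enquote{bad} ones, and the weights $w$ in the statement will record exactly how this bad mass distributes over the successors of $s$.

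First I would peel off one step. Since $s \notin \goalset$, every action in $s$ costs $1$, so $\totalsum<s, \strategy>$ has the same distribution as $1 + Y$, where $Y$ is the mixture that, with probability $\strategy((s), a) \cdot \mdptransitions(s, a, s')$, is distributed as $\totalsum<s', \strategy^a>$. As $\CVaR_\threshold$ and $\VaR_\threshold$ are translation-equivariant, $\CVaR_\threshold(\totalsum<s, \strategy>) = 1 + \CVaR_\threshold(Y)$ and the optimal selector for $\totalsum<s, \strategy>$ is the one for $Y$ shifted by $1$. Conditioning $Z^\star$ on the first transition $(a, s')$ gives a selector $Z^\star_{a, s'}$ for $\totalsum<s', \strategy^a>$, and I set $w(s') \coloneqq \Expectation[Z^\star_{a, s'}] \in [0, 1]$, the fraction of the $s'$-branch that is bad. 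Splitting $\Expectation[Z^\star] = \threshold$ along the branches of the fixed first action $a$ gives precisely $\sum_{s' \in \States} \mdptransitions(s, a, s') \cdot w(s') = \threshold$, the asserted constraint (for a randomised $\strategy$ one averages these per-action identities over $\strategy((s), \cdot)$).

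It then remains to reassemble. The crucial point is that, since $Z^\star$ is monotone in $1 + Y$, each restriction $Z^\star_{a, s'}$ is monotone in $\totalsum<s', \strategy^a>$ and cuts at the \emph{same} value; hence $Z^\star_{a, s'}$ is itself a $\CVaR$-optimal selector for $\totalsum<s', \strategy^a>$ at level $w(s')$, so $\Expectation[Z^\star_{a, s'} \cdot \totalsum<s', \strategy^a>] = w(s') \cdot \CVaR_{w(s')}(\totalsum<s', \strategy^a>)$. Substituting this into $\threshold \cdot \CVaR_\threshold(Y) = \Expectation[Y \cdot Z^\star] = \sum_{a, s'} \strategy((s), a) \mdptransitions(s, a, s') \cdot \Expectation[Z^\star_{a, s'} \cdot \totalsum<s', \strategy^a>]$, dividing by $\threshold$, and adding the leading $1$ gives the claimed decomposition of $\CVaR_\threshold(\totalsum<s, \strategy>)$ over the successors, the coefficient of $\CVaR_{w(s')}(\totalsum<s', \strategy^a>)$ being $\strategy((s), a) \mdptransitions(s, a, s')$ times the per-successor factor $w(s') / \threshold$ (and these coefficients sum to $1$). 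I expect the main obstacle to be the bookkeeping around this monotonicity step in the degenerate cases: when $\VaR_\threshold$ sits on an atom so the cut uses the fractional constant $c$; when $w(s') \in \{0, 1\}$, so a successor is \enquote{entirely good} or \enquote{entirely bad} (here the conventions $\CVaR_0 = \VaR_0$ and $\CVaR_1 = \Expectation$ must be invoked); and when $\strategy^a$ is improper on a selected branch, so $\totalsum<s', \strategy^a>$ puts mass at $\infty$ — then $w(s') > 0$ forces that mass above the cut, both sides are $\infty$, and the identity persists. Finally, one should note that $\VaR_\threshold(Y)$ is finite whenever the equation is actually used (it is used along proper behaviour, on which the bad mass does reach the goal), so the translation step $\VaR_\threshold(\totalsum<s, \strategy>) = 1 + \VaR_\threshold(Y)$ is legitimate there.
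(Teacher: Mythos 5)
Your route is genuinely different from the paper's: the paper offers no argument at all beyond \enquote{follows directly from the above discussion}, i.e.\ the informal observation that the fraction $\threshold$ of bad executions must distribute somehow over the successors. You formalize exactly this intuition via the Rockafellar--Uryasev dual representation $\threshold \cdot \CVaR_\threshold(X) = \max\{\Expectation[XZ] \mid 0 \leq Z \leq 1,\ \Expectation[Z] = \threshold\}$, define $w(s')$ as the conditional mass of the optimal selector on each branch, and use the monotone (greedy) structure of the maximizer to argue that each conditional selector is itself optimal at level $w(s')$. That argument is sound, it is the standard way to make the paper's hand-waving precise, and your treatment of the degenerate cases (atoms at the VaR, $w(s') \in \{0,1\}$, improper branches) is appropriate.

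One point you should make explicit rather than gloss over: what you actually derive is
\begin{equation*}
	\CVaR_\threshold(\totalsum<s, \strategy>) = 1 + \tfrac{1}{\threshold} {\sum}_{a \in \stateactions(s), s' \in \States} \strategy((s), a)\, \mdptransitions(s, a, s')\, w(s') \cdot \CVaR_{w(s')}(\totalsum<s', \strategy^a>),
\end{equation*}
with the extra per-term factor $w(s')/\threshold$, and this does \emph{not} coincide with the displayed equation of the lemma, whose coefficient is $\strategy((s), a)\,\mdptransitions(s, a, s')$ alone. The two are not equivalent, and the printed version appears to be the one at fault: take a chain where $s$ branches uniformly to $s'_1$ (deterministic remaining cost $1$) and $s'_2$ (deterministic remaining cost $3$) with $\threshold = \tfrac{1}{2}$; then $\CVaR_{1/2}(\totalsum<s>) = 4$, whereas $1 + \tfrac{1}{2}\CVaR_{w_1}(\totalsum<s'_1>) + \tfrac{1}{2}\CVaR_{w_2}(\totalsum<s'_2>) = 3$ for \emph{every} admissible choice of weights, since point-mass distributions have constant CVaR. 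Your reweighted identity evaluates to $4$ as it should. So you have proved a corrected version of the statement; presenting it as \enquote{the claimed decomposition} without flagging the discrepancy is the only real flaw. (Since the lemma serves purely as motivation for the Pareto-set construction and is not invoked in any later proof, nothing downstream is affected.)
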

\begin{proof}
	Follows directly from the above discussion.
\end{proof}
Note the similarity of the above equation to a Bellman update:
For known/fixed weights $w$, we would obtain a regular value iteration.
Unfortunately, these weights globally depend on the policy; even the decision in a state $s'$ which is not reachable from state $s$ influences the weight distribution in $s$.
Nevertheless, we can use the underlying insights to derive a value iteration approach.

\subsection{Characterization through Pareto Sets}

In light of \cref{stm:cvar_weight_decomposistion}, we do not want to compute the CVaR for a single threshold $\threshold$, but for \emph{all} thresholds $0 < \threshold \leq 1$.
So, essentially, we aim to answer the question \enquote{given a threshold of $\threshold$, what is the best achievable CVaR?} for all thresholds and all states.
We approach this question with a new, novel perspective.
In particular, we propose to answer a different question, namely, \enquote{for an arbitrary step-bound $n$, given that at least $1 - \threshold$ executions have to reach the goal within $n$ steps, what is the best expected time to reach the goal after $n$ steps?}, a trade-off which can be described by a \emph{Pareto set}.
Quite surprisingly, we can (i)~derive CVaR from such a set and (ii)~compute these sets for increasing $n$ using VI.

\begin{definition}
	Let $s \in \States$ a state and $n \in \Naturals$ a step bound.
	We define the \emph{SSP Pareto set} $\pareto<n><s> \subseteq [0, 1] \times \Reals_{\geq 0}$ where $(p, E) \in \pareto<n><s>$ iff there exists a policy $\strategy$ such that $\nstepnongoal<n><\strategy> \leq 1 - p$ and $\nstepexpectedcost<n><\strategy> \leq E$, i.e.\ 
		(i)~the probability to reach a goal state within $n$ steps starting in $s$ is at least $p$, and
		(ii)~after $n$ steps, the expected time to reach goal states is at most $E$.
\end{definition}
%
%
\begin{lemma} \label{stm:cvar_from_pareto}
	For every $(1 - \threshold, E) \in \pareto<n><\initialstate>$ with witness policy $\strategy$, we have $\CVaR_\threshold(\strategy) \leq n + \frac{1}{\threshold} E$.
	For every policy $\strategy$, we have $(1 - \threshold, \threshold \cdot (\CVaR_\threshold(\strategy) - n)) \in \pareto<n><\initialstate>$ where $n = \VaR_\threshold(\strategy)$.
\end{lemma}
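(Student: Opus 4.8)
The plan is to obtain both directions from one reformulation of \cref{eq:cvar_definition}. For a cost distribution $X$ write $g_X(v) \coloneqq v + \tfrac1\threshold \Expectation[(X-v)^+]$. Expanding \cref{eq:cvar_definition} at $v = \VaR_\threshold(X)$ with $\mathfrak V = \{X > v\}$ gives $\CVaR_\threshold(X) = \tfrac1\threshold\big(\Expectation[X\indicator{X>v}] - v\,\Probability[X>v]\big) + v = v + \tfrac1\threshold \Expectation[(X-v)^+] = g_X(v)$, which is \cref{stm:cvar_equation} written differently, using $\Expectation[(X-v)^+] = \sum_{k \ge v} \Probability[X > k]$. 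The same identity yields $g_X(v+1) - g_X(v) = 1 - \tfrac1\threshold \Probability[X > v]$, which is non-negative exactly when $\Probability[X > v] \le \threshold$; since $\Probability[X > \cdot]$ is non-increasing and is $\le \threshold$ precisely from $\VaR_\threshold(X)$ onwards, $g_X$ attains its minimum at $\VaR_\threshold(X)$, so
\[
	\CVaR_\threshold(X) = g_X(\VaR_\threshold(X)) \le g_X(n) = n + \tfrac1\threshold \Expectation[(X-n)^+] \qquad \text{for every step bound } n.
\]
Finally, for a policy $\strategy$ acting expectation-optimally from step $n$ on, the expected shortfall $\Expectation[(\totalsum<\initialstate,\strategy> - n)^+]$ equals $\sum_{s \in \States} \nstepprob<n><\strategy>(s) \cdot \expectedcost(s) = \nstepexpectedcost<n><\strategy>$, and $\Probability[\totalsum<\initialstate,\strategy> > n] = \nstepnongoal<n><\strategy>$.

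For the first claim, let $(1-\threshold, E) \in \pareto<n><\initialstate>$ with witness $\strategy$; we may take $\strategy$ to act expectation-optimally from step $n$ on, since replacing its tail by such a policy leaves $\nstepprob<n><\strategy>$ (hence $\nstepnongoal<n><\strategy> \le \threshold$) and $\nstepexpectedcost<n><\strategy> \le E$ unchanged (cf.\ \cref{stm:memoryless_after_var}). With $X = \totalsum<\initialstate,\strategy>$, the displayed inequality together with the identification above gives $\CVaR_\threshold(\strategy) \le n + \tfrac1\threshold \Expectation[(X-n)^+] = n + \tfrac1\threshold \nstepexpectedcost<n><\strategy> \le n + \tfrac1\threshold E$.

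For the second claim, fix $\strategy$, set $n = \VaR_\threshold(\strategy)$, and assume $\CVaR_\threshold(\strategy) < \infty$ (otherwise the asserted pair is not well-defined and the statement is vacuous). Applying \cref{stm:cvar_equation} to the Markov chain induced by $\strategy$ gives $\nstepexpectedcost<n><\strategy> = \threshold \, (\CVaR_\threshold(\strategy) - n)$, while $n = \VaR_\threshold(\strategy)$ gives $\nstepnongoal<n><\strategy> = \Probability[\totalsum<\initialstate,\strategy> > n] \le \threshold = 1 - (1-\threshold)$. Hence $\strategy$ itself witnesses $(1-\threshold, \, \nstepexpectedcost<n><\strategy>) = (1-\threshold, \, \threshold\,(\CVaR_\threshold(\strategy) - n)) \in \pareto<n><\initialstate>$.

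The crux is the monotonicity built into the displayed inequality: \cref{stm:cvar_equation} alone only pins $\CVaR_\threshold$ at $n = \VaR_\threshold$, whereas the Pareto set permits larger step bounds, and one must argue that reading off the tail past the VaR can only overestimate $\CVaR_\threshold$ — which is exactly why $g_X$ is minimized at $\VaR_\threshold(X)$. The remaining work is bookkeeping: matching the two notions of \enquote{expected cost remaining after step $n$} (the shortfall $\Expectation[(X-n)^+]$ of $\strategy$ versus $\nstepexpectedcost<n><\strategy>$), which forces the witness to be taken expectation-optimal after step $n$ via \cref{stm:memoryless_after_var}, and keeping track of the $\le \threshold$ boundary convention for $\VaR$ used in $\nstepnongoal<n><\strategy> \le \threshold$.
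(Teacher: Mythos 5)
Your proof is correct, and the second half coincides with the paper's argument (apply \cref{stm:cvar_equation} at $n = \VaR_\threshold(\strategy)$ and read off the two coordinates). The first half takes a genuinely different, and arguably cleaner, route: you establish the Rockafellar--Uryasev variational representation $\CVaR_\threshold(X) = \min_v \bigl(v + \tfrac1\threshold \Expectation[(X-v)^+]\bigr)$, with the minimum at $v = \VaR_\threshold(X)$, and then specialize $v = n$. The paper instead works directly at $n' = \VaR_\threshold(\strategy) \leq n$ and proves the ad hoc bound $\nstepexpectedcost<n'><\strategy> \leq (1-p')(n-n') + E \leq \threshold (n - n') + E$ by counting the steps between $n'$ and $n$; unrolling your discrete-derivative computation $g_X(v{+}1) - g_X(v) = 1 - \tfrac1\threshold\Probability[X > v]$ over $v = n', \dots, n-1$ recovers exactly that inequality, so the two arguments encode the same monotonicity, but yours isolates it as a general statement about distributions rather than a model-specific estimate, which makes the \enquote{reading the tail past the VaR can only overestimate} intuition explicit and reusable (e.g.\ for the total-cost generalization). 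Your extra care in replacing the witness's tail by an expectation-optimal policy (so that $\Expectation[(X-n)^+]$ really equals $\nstepexpectedcost<n><\strategy>$) is a point the paper leaves implicit in its definition of $\nstepexpectedcost<n><\strategy>$ via the optimal $\expectedcost(s)$; strictly speaking both you and the paper then prove the CVaR bound for this modified policy rather than for an arbitrary witness, which is the intended reading of the lemma. No gaps.
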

To obtain an algorithm based on this idea, we need an effective procedure to compute $\pareto<n><s>$.
To this end, we show that $\pareto<n><s>$ is a convex polygon where vertices correspond to deterministic policies, and show how $\pareto<n><s>$ can be computed using a Bellman-style iteration.
\begin{lemma} \label{stm:pareto_shape}
	The set $\pareto<n><s>$ is an upward and leftward closed, convex polygon where all vertices correspond to Markovian deterministic policies.
\end{lemma}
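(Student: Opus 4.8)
The plan is to prove $\pareto<n><s>$ is a convex polygon by induction on $n$, simultaneously establishing the closure properties and the correspondence of vertices to Markovian deterministic policies. The base case $n = 0$ should be direct: $\pareto<0><s>$ records whether $s$ is a goal state (in which case the probability can be $1$ and the post-$0$-step expected cost $0$) or not (in which case probability is $0$ and the expected cost is $\expectedcost(s)$, with only the trivial — hence vacuously Markovian deterministic — "policy"), yielding in either case a single vertex whose upward/leftward closure is the claimed polygon. For the inductive step, I would express $\pareto<n+1><s>$ in terms of the sets $\pareto<n><s'>$ of successors via a one-step (Bellman-style) operation: first take, for each action $a \in \stateactions(s)$, the convex combination $\sum_{s'} \mdptransitions(s,a,s') \cdot \pareto<n><s'>$ (Minkowski-type weighted sum of sets), which captures choosing $a$ now and then behaving according to the chosen points downstream; then shift the expected-cost coordinate to account for the one extra step already taken by runs not yet at the goal (reflecting the "$+1$" and the bookkeeping of $\nstepexpectedcost{}$); finally take the union over $a$ and the upward/leftward closure.

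**Next I would verify** each operation preserves the class of "upward- and leftward-closed convex polygons." Weighted Minkowski sums of convex polygons are convex polygons, and the relevant coordinate shifts and closures are affine/monotone operations that preserve convexity and the closure directions; the one genuinely non-convex operation is the \emph{union over actions}, so the key point is that after taking upward/leftward closure the union of finitely many such closed polygons is again one such polygon — its boundary is the lower-left envelope of the individual boundaries. For the vertex claim, I would argue that every vertex of the resulting polygon arises from: picking a single action $a^*$ at $s$ (a union cannot create a vertex in the "interior" combining two actions after closure — a convex-combination point of two polygons' points lies weakly above-left of one of them, so only a single summand polygon can be "active" at a vertex), and picking a vertex in each summand $\pareto<n><s'>$; by the inductive hypothesis these summand vertices correspond to Markovian deterministic policies on the respective successors, which glue (since the step index $n$ disambiguates) into a single Markovian deterministic policy realizing the vertex.

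**The main obstacle** I anticipate is making the "vertices of a weighted Minkowski sum / of a closure-of-union come from vertices of the summands" argument fully rigorous, including the degenerate situations: a transition distribution $\mdptransitions(s,a,\cdot)$ with several successors means the summand polygons can be non-trivially combined, and one must rule out that a vertex of the sum is an interior convex combination of non-vertex points of the summands — this is true for upward/leftward-closed polygons because their "lower-left" boundary is piecewise linear with finitely many breakpoints and the relevant support-function argument forces extremality coordinatewise, but spelling it out cleanly (and handling ties, where a vertex of the sum could correspond to an edge rather than a vertex of a summand — still realizable by \emph{some} vertex by perturbation) takes care. A secondary subtlety is confirming that the Bellman-style operation I write down is exactly $\pareto<n+1><s>$ — i.e., that every policy's $(\nstepnongoal{}, \nstepexpectedcost{})$ pair after $n+1$ steps decomposes through the successors' $n$-step behaviour, which follows from linearity of expectation and the definition of $\nstepexpectedcost{}$ but should be stated carefully, noting that goal states are absorbing with zero cost so they contribute a fixed $(1,0)$-type point.

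**I would also remark** that this lemma naturally pairs with an explicit description of the iteration operator, which the algorithm in the subsequent section presumably uses; the convex-polygon structure guarantees the sets have finitely many vertices at each stage (so the representation is finite), though bounding \emph{how many} vertices — and hence the algorithm's complexity — is a separate matter tied to the exponential VaR bound of \cref{stm:var_bound_from_cvar} and is not needed for the structural claim here.
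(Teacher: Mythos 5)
Your overall strategy---induction on $n$, a one-step Bellman-style decomposition of $\pareto<n+1><s>$ into transition-weighted Minkowski sums of the successors' sets, and a convex-geometry argument that vertices of the result come from a single action and from vertices of the summands---is the same as the paper's, and your treatment of the vertex claim is in fact more careful than the paper's one-line conclusion. However, two steps in your construction of the iteration operator are wrong as stated. First, the shift of the expected-cost coordinate \enquote{to account for the one extra step} must not be there. By definition, the second coordinate of $\pareto<n><s>$ bounds $\nstepexpectedcost<n><\strategy> = \sum_{s} \nstepprob<n><\strategy>(s)\cdot\expectedcost(s)$, i.e.\ the expected \emph{remaining} time after the horizon $n$; when you prepend one step, the state reached after $n+1$ steps from $s$ is distributed exactly as the state reached after $n$ steps from the chosen successor, so the recursion is purely linear, $(p,E) = \sum_{a,s'} w(a)\,\mdptransitions(s,a,s')\,(p_{a,s'},E_{a,s'})$, with no additive correction. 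The steps already taken enter only later, in the conversion $\CVaR_\threshold(\strategy) \leq n + \tfrac{1}{\threshold}E$ of \cref{stm:cvar_from_pareto}; inserting a per-step shift into the Pareto recursion would compute a different (and, for the downstream algorithm, wrong) family of sets.

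Second, after the union over actions you take only the upward/leftward closure and assert that the result is again a convex polygon whose boundary is the lower-left envelope of the individual boundaries. That is false: the union of finitely many upward- and leftward-closed convex polygons is in general not convex. For instance, if one action yields only the point $(1,10)$ together with its closure and another yields only $(0.5,0)$ together with its closure, the mixture $(0.75,5)$ lies in neither closed set, yet it is achievable by randomizing over the two actions at the first step and hence must belong to $\pareto<n+1><s>$. The correct operation is the \emph{convex hull} of the union, as in \cref{stm:pareto_computation}, and it is semantically justified precisely because the distribution $w$ over actions in the decomposition need not be Dirac. Your vertex argument survives this correction---vertices of $\conv(A\union B)$ are vertices of $A$ or of $B$---but the set you would compute without the convex hull is strictly smaller than $\pareto<n+1><s>$.
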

\begin{proof}[Proof sketch (see Appendix)]
	Closure and Convexity follow directly.
	We prove the polygon-claim by induction on $n$.

	For $n = 0$, we either have that $\pareto<0><s> = [0, 1] \times \Reals_{\geq 0}$ if $s \in \goalset$, or, if $s \notin \goalset$, $\pareto<0><s> = \{0\} \times [e(s), \infty)$, both by definition.
	In both cases, $\pareto<0><s>$ is a polygon with the extremal points $(1, 0)$ and $(0, e(s))$, respectively, obtained by stationary policies.

	For the induction step, fix $n$ and a state $s$.
	We prove that $(p, E) \in \pareto<n+1><s>$ iff there exist a distribution $w : \stateactions(s) \to [0, 1]$ and achievable points $(p_{a, s'}, E_{a, s'}) \in \pareto<n><s'>$ 
	such that
	\begin{equation*}
		(p, E) = {\sum}_{a \in \stateactions(s), s' \in \States} w(a) \cdot \mdptransitions(s, a, s') \cdot (p_{a, s'}, E_{a, s'})
	\end{equation*}
	This equality follows from the interpretation of the Pareto set:
	For all actions $a$ and successors $s'$ there exists a policy $\strategy_{a, s'}$ such that after $n$ steps at least a fraction of $p_{a, s'}$ executions have reached the goal and the expected time to reach the goal is at most $E_{a, s'}$.
	So, we can take one step and then simply follow these respective policies to achieve the values in the equation.
	Dually, if there is a policy $\strategy$ for $(p, E) \in \pareto<n+1><s>$, we immediately get policies achieving the respective values in the successors (note the similarity to regular value iteration).
	The claim follows by the hypothesis.
\end{proof}
This proof also yields an effective way of computing $\pareto<n><s>$.
\begin{corollary} \label{stm:pareto_computation}
	We have that
	\begin{equation*}
		\pareto<n + 1><s> = \conv\left({\Union}_{a \in \stateactions(s)} {\bigoplus}_{s' \in \States} \mdptransitions(s, a, s') \cdot \pareto<n><s'>\right),
	\end{equation*}
	where $\oplus$ is the Minkowski sum and $\conv$ the convex hull.
\end{corollary}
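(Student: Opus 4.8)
The plan is to obtain \cref{stm:pareto_computation} directly from the characterization established inside the proof of \cref{stm:pareto_shape}: a point $(p, E)$ lies in $\pareto<n+1><s>$ if and only if there are a distribution $w : \stateactions(s) \to [0, 1]$ and points $(p_{a, s'}, E_{a, s'}) \in \pareto<n><s'>$ (for $a \in \stateactions(s)$, $s' \in \States$) with
\[
  (p, E) = {\sum}_{a \in \stateactions(s),\, s' \in \States} w(a) \cdot \mdptransitions(s, a, s') \cdot (p_{a, s'}, E_{a, s'}).
\]
First I would unfold the right-hand side of the claimed equation. By definition of scaling of a set and of the Minkowski sum, a point belongs to ${\bigoplus}_{s' \in \States} \mdptransitions(s, a, s') \cdot \pareto<n><s'>$ precisely when it equals $q_a \coloneqq {\sum}_{s' \in \States} \mdptransitions(s, a, s') \cdot (p_{a, s'}, E_{a, s'})$ for some choice of $(p_{a, s'}, E_{a, s'}) \in \pareto<n><s'>$; thus the union over $a \in \stateactions(s)$ collects exactly the ``play $a$ deterministically in the first step'' points $q_a$, and taking the convex hull adds all their convex combinations.

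Next I would prove the two inclusions. For $\pareto<n+1><s> \subseteq \conv(\cdots)$, take $(p, E) \in \pareto<n+1><s>$, apply the characterization to get $w$ and the $(p_{a, s'}, E_{a, s'})$, and note that $(p, E) = {\sum}_{a} w(a) \cdot q_a$ with each $q_a$ in the union and ${\sum}_a w(a) = 1$ -- a convex combination, hence $(p, E) \in \conv(\cdots)$. For the converse, take $r \in \conv({\Union}_{a} {\bigoplus}_{s'} \mdptransitions(s, a, s') \cdot \pareto<n><s'>)$ and write $r = {\sum}_j \lambda_j r_j$ with $\lambda_j \geq 0$, ${\sum}_j \lambda_j = 1$, each $r_j$ lying in the summand for some action $a_j$. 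Group the indices by their action and set $w(a) \coloneqq {\sum}_{j : a_j = a} \lambda_j$. Using that $\pareto<n><s'>$ is convex (by \cref{stm:pareto_shape}), every summand ${\bigoplus}_{s'} \mdptransitions(s, a, s') \cdot \pareto<n><s'>$ is convex -- a Minkowski sum of scalings of convex sets -- so the weighted contribution of the group for action $a$ collapses to a single representative $q_a$ lying in that summand. Then $r = {\sum}_a w(a) \cdot q_a$ is of the form required by the characterization, so $r \in \pareto<n+1><s>$.

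The one step I would flag explicitly is this collapsing of same-action contributions into a single point: it relies precisely on convexity of each $\pareto<n><s'>$ (hence of every Minkowski-sum summand), which is exactly what \cref{stm:pareto_shape} supplies, together with $\mdptransitions(s, a, \cdot)$ being a probability distribution (so the $w(a)$ sum to one, and terms with $\mdptransitions(s, a, s') = 0$ contribute only $\{(0, 0)\}$ and are harmless). Everything else is routine set bookkeeping. I would close by remarking that, since by the same lemma each $\pareto<n><s'>$ is a polygon with finitely many vertices, the Minkowski sums and the convex hull on the right are finitely representable and effectively computable, which is the promised concrete recipe for computing $\pareto<n><s>$.
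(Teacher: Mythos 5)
Your proof is correct and takes essentially the same route as the paper: the paper derives \cref{stm:pareto_computation} directly from the characterization of $\pareto<n+1><s>$ established in the proof of \cref{stm:pareto_shape}, and your argument is precisely the detailed set-theoretic unfolding of that implication (including the one genuinely non-trivial step, collapsing same-action convex combinations via convexity of the Minkowski-sum summands). Nothing is missing.
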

%
%
With these results, we are ready to present our value iteration approach in \cref{alg:cvar_reachability}.
As expected, it computes $\pareto<n><s>$ for increasing $n$ using \cref{stm:pareto_computation} and derives the optimal obtainable CVaR assuming that the VaR is at most $n$ using \cref{stm:cvar_from_pareto}.
On top, the algorithm uses \cref{stm:var_bound_from_cvar} as stopping criterion, ultimately yielding the optimal CVaR.
Note that the algorithm can compute the optimal CVaR for several thresholds $\threshold$ simultaneously at essentially no additional cost:
While computing the CVaR for the smallest threshold, the CVaR corresponding to the other thresholds can be obtained as an intermediate result.
\begin{algorithm}[t]
	\caption{Value Iteration to compute CVaR} \label{alg:cvar_reachability}
	\begin{algorithmic}
		\Require MDP $\MDP$, threshold $\threshold$
		\Ensure Optimal $\CVaR_\threshold$

		\State $\mathtt{c} \gets \infty$, $n \gets 0$
		\While{$n \leq \mathtt{c}$}
			\State Compute $\pareto<n><s>$ for all $s \in \States$
			\State $\mathtt{c}_n \gets n + \frac{1}{\threshold} \cdot \left( \min \{E \mid (1 - \threshold, E) \in \pareto<n><\initialstate>\} \union \{ \infty \} \right)$
			\State $\mathtt{c} \gets \min(\mathtt{c}, \mathtt{c}_n)$, $n \gets n + 1$
		\EndWhile
		\Return $\mathtt{c}$
	\end{algorithmic}
\end{algorithm}
\begin{theorem} \label{stm:cvar_algo_correct}
	\Cref{alg:cvar_reachability} is correct, i.e.\ always terminates and returns the optimal CVaR.
	Moreover, it is EXPTIME.
\end{theorem}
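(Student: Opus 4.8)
**Proof plan for Theorem \ref{stm:cvar_algo_correct} (correctness and EXPTIME complexity of Algorithm \ref{alg:cvar_reachability}).**

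The plan is to establish three things: (a) every value $\mathtt{c}_n$ computed in the loop is an achievable CVaR (or $\infty$), hence $\mathtt{c} \geq \CVaR_\threshold^*$ is never violated from below; (b) for $n^* = \VaR_\threshold(\strategy^*)$, where $\strategy^*$ is an optimal policy (which exists by Theorem \ref{stm:cvar_optimal_policy_exists}), the iteration reaches $n^*$ and sets $\mathtt{c}_{n^*} \leq \CVaR_\threshold^*$, so $\mathtt{c}$ converges to exactly $\CVaR_\threshold^*$; (c) the loop performs only exponentially many iterations, each of exponential cost. For (a), fix an iteration $n$ for which the minimum is finite and let $E = \min\{E \mid (1-\threshold, E) \in \pareto<n><\initialstate>\}$. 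By Lemma \ref{stm:cvar_from_pareto}, the witness policy $\strategy$ for $(1-\threshold, E) \in \pareto<n><\initialstate>$ satisfies $\CVaR_\threshold(\strategy) \leq n + \frac{1}{\threshold} E = \mathtt{c}_n$; so $\mathtt{c}_n \geq \CVaR_\threshold^*$ only if... wait, rather the point is that $\mathtt{c}_n$ is an upper bound realized by an actual policy, hence $\mathtt{c}_n \geq \CVaR_\threshold^*$ and $\mathtt{c} \geq \CVaR_\threshold^*$ throughout. For (b), apply the second half of Lemma \ref{stm:cvar_from_pareto} to $\strategy^*$: with $n^* = \VaR_\threshold(\strategy^*)$ we get $(1-\threshold,\ \threshold(\CVaR_\threshold^* - n^*)) \in \pareto<n^*><\initialstate>$, so once the algorithm computes $\pareto<n^*><\initialstate>$ it finds some $E \leq \threshold(\CVaR_\threshold^* - n^*)$ and sets $\mathtt{c}_{n^*} = n^* + \frac{1}{\threshold} E \leq \CVaR_\threshold^*$. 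Combined with (a) this forces $\mathtt{c}_{n^*} = \CVaR_\threshold^*$.

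The remaining issue for correctness is the termination/stopping condition: I must show the loop actually reaches index $n^*$ before exiting, i.e.\ that the guard $n \leq \mathtt{c}$ is still true when $n = n^*$. This is where Lemma \ref{stm:var_bound_from_cvar} enters: for \emph{any} policy $\strategy$ we have $\VaR_\threshold(\strategy^*) \leq \CVaR_\threshold(\strategy)$, in particular $n^* = \VaR_\threshold(\strategy^*) \leq \CVaR_\threshold(\strategy)$ for every policy realizing a finite CVaR. Since every value $\mathtt{c}_n$ ever assigned to $\mathtt{c}$ is either $\infty$ or the CVaR of some policy (by part (a)), we have $n^* \leq \mathtt{c}$ at the start of every iteration, so the guard $n \leq \mathtt{c}$ holds for all $n \leq n^*$ and the loop does reach $n^*$. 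After that, $\mathtt{c}$ has been set to $\CVaR_\threshold^* \leq n^*$... actually $\CVaR_\threshold^* \geq n^*$ in general is false — $\CVaR_\threshold^*$ could exceed $n^*$; the point is that after iteration $n^*$ we have $\mathtt{c} = \CVaR_\threshold^*$, and by Lemma \ref{stm:var_bound_from_cvar} (applied with $\strategy = \strategy^*$) $n^* \leq \CVaR_\threshold^* = \mathtt{c}$, but we need the loop to \emph{stop}, which happens once $n > \mathtt{c} = \CVaR_\threshold^*$; since $\CVaR_\threshold^*$ is at most exponential by Lemma \ref{stm:mdp_cvar_exponential_bound} and $n$ increments by one each round, the loop terminates after at most $\lceil \CVaR_\threshold^* \rceil + 1$ iterations, a finite (exponentially bounded) number. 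No later iteration can lower $\mathtt{c}$ below $\CVaR_\threshold^*$ by part (a), so the returned value is exactly $\CVaR_\threshold^*$.

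For the EXPTIME bound: the number of iterations is $O(\CVaR_\threshold^*)$, which is $2^{O(\mathrm{poly}(|\MDP|, \log \threshold^{-1}))}$ by Lemma \ref{stm:mdp_cvar_exponential_bound} (via Lemma \ref{stm:var_mc_exponential} and the proper-policy assumption). Each iteration computes $\pareto<n+1><s>$ from the $\pareto<n><s'>$ via Corollary \ref{stm:pareto_computation} — a Minkowski sum over successors followed by a union over actions and a convex hull. I need to bound the combinatorial complexity (number of vertices) of $\pareto<n><s>$: by Lemma \ref{stm:pareto_shape} vertices correspond to Markovian deterministic policies on the $n$-step unrolling, of which there are at most $|\Actions|^{n|\States|}$, so the vertex count is at most exponential in $n|\States|$, itself exponential in the input; Minkowski sum and convex hull of polygons with $V$ vertices cost $\mathrm{poly}(V)$, keeping each iteration in EXPTIME. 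Multiplying an exponential number of iterations by an exponential per-iteration cost stays in EXPTIME. I expect the \textbf{main obstacle} to be the bookkeeping around the stopping condition — carefully arguing that the guard stays true through iteration $n^*$ and that $\mathtt{c}$ never drops below $\CVaR_\threshold^*$ afterwards — together with pinning down an honest (exponential) bound on the representation size of the Pareto polygons across iterations so that the per-iteration geometric operations are provably EXPTIME rather than merely "finite". Both are conceptually routine given the cited lemmas but require care to state precisely.
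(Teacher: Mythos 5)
Your proposal is correct and follows essentially the same route as the paper's proof: both directions of \cref{stm:cvar_from_pareto} to show each $\mathtt{c}_n$ is realizable and that the optimum is hit at $n^* = \VaR_\threshold(\strategy^*)$, \cref{stm:var_bound_from_cvar} to keep the loop guard alive through iteration $n^*$, and \cref{stm:mdp_cvar_exponential_bound} together with \cref{stm:mdp_var_exponential_bound} for the exponential iteration bound. The only (cosmetic) difference is in bounding the size of $\pareto<n><s>$: you count Markovian deterministic policies directly, whereas the paper derives the slightly tighter recurrence $v_{n+1}(s) \leq \abs{\stateactions(s)} \cdot \sum_{s' \in \States} v_n(s')$ from the Minkowski-sum structure of \cref{stm:pareto_computation} --- both yield the same exponential-in-$n$ conclusion.
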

%

\subsection{Total Cost}

To extend our approach to the general scenario of total cost, only minor adjustments are necessary.
We omit the completely analogous proofs of correctness.
Note that both cases show that instead of Markovian policies, we now require policies which (only) depend on the total incurred cost.

\paragraph{Linear Programming}
Recall that in order to obtain the LP approach we effectively \enquote{unrolled} the MDP, augmenting the state space with a step counter.
We change this counter to track the accumulated cost, i.e.\ require
\begin{equation*}
	p_{s', i} = {\sum}_{s \in \States, a \in \stateactions(s)} p_{s, a, i - \actioncost(s, a)} \cdot \mdptransitions(s, a, s')
\end{equation*}
for $i < n$, adapting appropriately at the boundary.
Informally, $p_{s, a, i}$ now corresponds to the probability of taking action $a$ in state $s$ when the total accumulated cost so far is $i$.

Around the VaR-guess $n$, additional care is needed:
For example, suppose we are at state $s$ with an accumulated cost of $n - 1$ and take action $a$ with cost $\actioncost(s, a)$.
In the LP, we would thus consider the variable $p_{s', n - 1 + \actioncost(s, a)}$.
Now, for $\actioncost(s, a) > 1$, we need to modify the objective to also consider this part of the flow.
In particular, we change the objective function to $\sum_{s \in \States, 0 \leq c < C_{\max}} p_{s, n + c} \cdot (e(s) + c)$, where $C_{\max} = \max \actioncost(s, a)$ is the maximal cost.

\paragraph{Value Iteration}
The Pareto set describes a trade-off between probability of reaching and expected remaining steps at $n$ steps.
We again change the interpretation of $n$ to a cost bound.
This means that a point $(p, E)$ is in $\pareto<n><s>$ iff the goal can be reached with probability at least $p$ while incurring a cost of at most $n$ and at the same time have an expected cost of at most $E$ to reach the goal afterwards.
All statements can be replicated analogously, in particular we arrive at
\begin{equation*}
	\pareto<n><s> = \conv\left({\Union}_{a \in \stateactions(s)} {\bigoplus}_{s' \in \States} \mdptransitions(s, a, s') \cdot \pareto<n - \actioncost(s, a)><s'>\right),
\end{equation*}
where $\pareto<n><s> = \emptyset$ for $n < 0$.
Consequently, we need to store the last $C_{\max}$ Pareto sets for each state in the VI algorithm.

\paragraph{Zero Cost States}
We assumed that there are no zero-cost actions for simplicity.
This assumption is implicitly applied in the above arguments.
We can however \enquote{inline} such states through additional pre-computation:
Intuitively, for a zero-cost action $a$, we can compute all possible ways the system could evolve after choosing $a$ while restricted to using only zero-cost actions, and then replace the outcome of $a$ with these options.
The technical details of this procedure however are quite involved and beyond the scope of this work.
\begin{table*}[t]
	\caption{Summary of our experiments.
		For each model we list, from left to right, the number of states, actions, and transitions, the expected total cost until goal states are reached, i.e.\ the classical SSP value, the considered threshold, resulting VaR and CVar, and finally the times required to compute the SSP values, CVaR via LP, and CVaR via VI, respectively.
		\emph{MO} denotes a memout.
		To ease presentation, we only considered a threshold of $\threshold = 10\%$ for this experiment.
	} \label{tbl:experiments}
	\centering %
	\newcommand{\cc}[1]{\multicolumn{1}{c}{#1}}
	\begin{tabular}{lrrrrrrrrr}
		Model                   & \cc{$\cardinality{\States}$} & \cc{$\cardinality{\Actions}$} & \cc{$\cardinality{\mdptransitions}$} & \cc{$\Expectation$} & \cc{$\VaR_{10\%}$} & \cc{$\CVaR_{10\%}$} & \cc{SSP} &   \cc{LP} & \cc{VI} \\
		\midrule
		\textbf{Grid}($x = 4$)  &                      1{,}270 &                       4{,}230 &                             12{,}390 &                 6.7 &                  8 &                11.0 &       0s &        2s &      0s \\
		\textbf{Grid}($x = 8$)  &                      3{,}196 &                      12{,}796 &                             37{,}996 &                13.1 &                 16 &                17.6 &       1s &      199s &      2s \\
		\textbf{Grid}($x = 16$) &                      6{,}396 &                      28{,}796 &                             85{,}996 &                19.3 &                 22 &                23.6 &       2s &  2{,}373s &     11s \\
		\textbf{Grid}($x = 32$) &                     12{,}796 &                      60{,}796 &                            181{,}996 &                35.9 &                 39 &                40.8 &      15s & $>$1h &    221s \\
		\midrule
		\textbf{FireWire}       &                    138{,}130 &                     302{,}654 &                            304{,}826 &               166.2 &                167 &               167.0 &       3s & \emph{MO} &      3s \\
		\textbf{WLAN}           &                     87{,}345 &                     157{,}457 &                            177{,}639 &                48.0 &                 61 &                62.3 &       1s & \emph{MO} &      1s
	\end{tabular}
\end{table*}

\section{Evaluation}

We implemented prototypes of our algorithms in Java (Oracle JVM 17.0.1), delegating LP calls to \texttt{Gurobi} 9.1.2, running on consumer-grade hardware (AMD Ryzen 5 3600, 3.60 Ghz, 16 GB RAM).
The JVM is limited to 10 GB of RAM through \texttt{-Xmx10G}.
We augmented the VI approach by parallel computation and implemented a tailored Minkowski sum / convex hull computation.
The LP approach computes a lower bound on the minimal VaR via a reachability VI.
These optimizations already yield order-of-magnitude improvements compared to a naive implementation.
Our implementation, all models, and instructions to reproduce the experiments can be found at \url{https://doi.org/10.5281/zenodo.5764140}\nocite{meggendorfer_tobias_2021_5764141}.

Since \cite{DBLP:conf/nips/ChowTMP15} solve a similar problem (albeit with discounting), we tried evaluating their approach with a sufficiently large discounting factor, too.
Unfortunately, we could not obtain an implementation of their methods.

\begin{figure}[t]
	\centering
	\begin{tikzpicture}[auto,scale=0.8]
		\draw[step=1.0,darkgray] (0,0) grid (4,4);
		\node[anchor=center] at (3.5,2.5) {\faFan};
		\node[anchor=center] at (1.5,1.5) {\faTrash};

		\node[anchor=center] at (0.5,0.5) (droid) {\faRobot};
		\node[anchor=center] at (2.5,1.5) (janitor) {\faBroom};
		\node[anchor=center] at (3.5,3.5) {\faHome};

%
%

	\end{tikzpicture}
	\caption{
		Visual representation of \textbf{Grid} for size 4x4.
		The robot currently is at $(1,1)$, while the janitor is at $(3,2)$.
	} \label{fig:gridworld}
\end{figure}
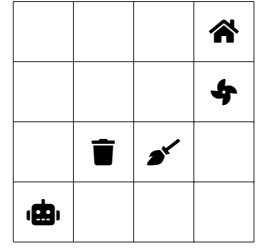

\paragraph{Models}
To test out our methods, we first consider a hand-crafted probabilistic path planning task on a grid world, called \textbf{Grid}.
A robot has to move to its charging station, avoiding fixed obstacles.
Moreover, a janitor is moving semi-randomly through a part of the region.
In particular, the janitor faces a direction and either moves into that direction or randomly turns to one side.
Whenever the robot is close to the janitor (distance $\leq$ 1), it is not allowed to move in order to avoid being stepped on.
We consider the problem for several grid widths to investigate scalability.
In order to keep the probability of interaction with the janitor high, we always restrict the janitor to a 4x4 grid that the robot necessarily has to cross.
Furthermore, we consider two models from the literature, namely \textbf{FireWire} \cite{DBLP:journals/fac/KwiatkowskaNS03}, the IEEE 1394 \enquote{FireWire} root contention protocol, and \textbf{WLAN} \cite{DBLP:conf/papm/KwiatkowskaNS02}, the CSMA/CA mechanism of the 802.11 Wireless LAN protocol.
See \cite{DBLP:journals/fmsd/KwiatkowskaNPS06} for further details on how \textbf{FireWire} and \textbf{WLAN} are constructed.

To evaluate the influence of the threshold, we consider another hand-crafted model \textbf{Walk}:
Here, the system moves along a line of length $n$ and can at each position choose to take one step forward, succeeding with 50\% probability, or gamble to double its current position.
The doubling action has a 10\% chance to fail, instead halving the current position, and is disabled if it fails thrice.
Note that the risk of the gamble, i.e.\ how much is \enquote{lost} in case of a fail, changes with the current position of the system.
As such, depending on the level of risk aversion, the system may choose the gambling option at different positions (or even not at all).

For simplicity, we consider uniform cost for all models.

\paragraph{Results}
Our results for the first experiment are summarized in \cref{tbl:experiments}.
We clearly see that the LP approach quickly becomes infeasible, while VI can tackle significantly larger models.
This is in line with the usual observations, where LP is more appealing in theory, but VI scales much better in practice.
We highlight that the time required by VI is comparable to the time needed to simply optimize SSP.
While solving the SSP first is required by our methods, it nevertheless is encouraging that the overhead of risk-aware optimization is not too large in these cases. 

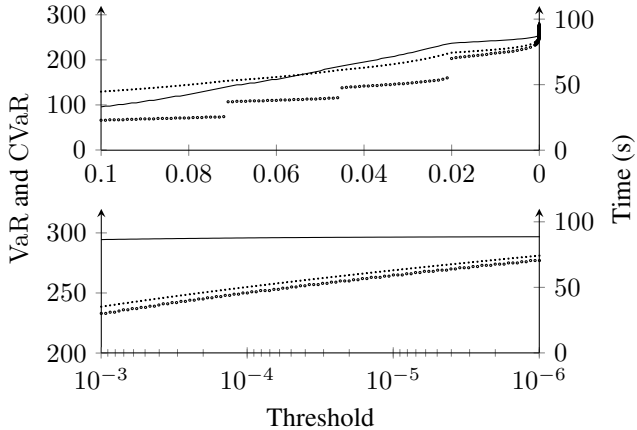
\begin{figure}[t]
	\centering
	\begin{tikzpicture}
		\begin{axis}[xmin=0, xmax=0.1, ymin=0, ymax=320, x dir=reverse,
				height=3.5cm, width=0.85 \columnwidth,
				hide x axis, axis y line=left, ylabel={$\VaR$ and $\CVaR$}, ylabel style={at=(ticklabel cs:-0.2)}]
			\addplot[only marks,mark=o,mark options={scale=0.2}] table [x index=0, y index=1, col sep=comma] {figure_data_random.csv};
			\addplot[only marks,mark=x,mark options={scale=0.2}] table [x index=0, y index=2, col sep=comma] {figure_data_random.csv};
		\end{axis}
		\begin{axis}[xmin=0, xmax=0.1, ymin=0, ymax=110, x dir=reverse,
				height=3.5cm, width=0.85 \columnwidth,
				xticklabel style={
					/pgf/number format/fixed,
					/pgf/number format/precision=5
				},
				axis x line=bottom, axis y line=right, xlabel=\empty, ylabel={Time (s)}, ylabel style={at=(ticklabel cs:-0.2)}, x axis line style={-}, legend pos=north west]
			\addplot[no marks] table [x index=0, y index=3, col sep=comma] {figure_data_random.csv};
		\end{axis}
		
		\begin{scope}[yshift=-2.7cm]
			\begin{axis}[xmode=log,xmin=0.0000001, xmax=0.001, ymin=200, ymax=320, x dir=reverse,
					height=3.5cm, width=0.85 \columnwidth,
					hide x axis, axis y line=left]
				\addplot[only marks,mark=o,mark options={scale=0.2}] table [x index=0, y index=1, col sep=comma] {figure_data_random.csv};
				\addplot[only marks,mark=x,mark options={scale=0.2}] table [x index=0, y index=2, col sep=comma] {figure_data_random.csv};
			\end{axis}
			\begin{axis}[xmode=log,xmin=0.000001, xmax=0.001, ymin=0, ymax=110, x dir=reverse,
					height=3.5cm, width=0.85 \columnwidth,
					xticklabel style={
						/pgf/number format/fixed,
						/pgf/number format/precision=5
					},
					axis x line=bottom, axis y line=right, xlabel={Threshold}, ylabel=\empty, x axis line style={-}, legend pos=north west]
				\addplot[no marks] table [x index=0, y index=3, col sep=comma] {figure_data_random.csv};
			\end{axis}
		\end{scope}
	\end{tikzpicture}
	\caption{
		Runtime evaluation of our VI approach on the \textbf{Walk} model for different thresholds.
		We also depict the VaR and CVaR for each threshold by dots.
		Note that thresholds are \emph{decreasing} from left to right.
		For readability, we include a zoomed, logarithmic plot for thresholds from $10^{-3}$ to $10^{-6}$.
	} \label{fig:cvar_evolution}
\end{figure}

The results of the second experiment, evaluating the influence of the threshold, are depicted in \cref{fig:cvar_evolution}.
We omitted evaluating the LP approach here, since it took over 30 minutes to evaluate a single threshold.
We clearly see the points where the optimal strategy switches away from taking the risky doubling action by a sharp increase of $\VaR$.
Moreover, the VaR and CVaR increase mostly linearly with the threshold, only spiking exponentially for very small thresholds.
This is to be expected due to \cref{stm:var_mc_exponential}: a small portion of probability mass remains inside the system for a long time.
However, the time required for the VI steps decreased drastically, since the number of points in $\pareto<n><s>$ decreased.
After this many steps, only a single dominant strategy remains, and most Pareto sets actually are singletons.
Altogether, we observe that the runtime of VI seems to depend mostly linearly on the threshold, even on an adversarially crafted model.

\paragraph{Improvements}
As our simple optimization heuristics already yielded significant improvements, there likely are many further possibilities.
We conjecture that additional structural properties might be used to speed up computation of $\pareto<n><s>$, e.g.\ a special structure of optimal policies.
Moreover, we found that the performance of VI improves if we merge extremal points of $\pareto<n><s>$ which are, for example, very close to each other or lie just on boundary of the convex hull (i.e.\ removing them barely changes $\pareto<n><s>$).
Since Bellman operators typically are contractive, we conjecture that the error introduced by this merging can be bounded, allowing for a trade-off between precision and speed.
More generally, we think that in order to achieve a given precision of $\varepsilon$, polynomially in $\frac{1}{\varepsilon}$ and $\log \threshold$ many points for $\pareto<n><s>$ may be sufficient.
Here, the ideas of \cite{DBLP:conf/focs/PapadimitriouY00} could be applicable.

\section{Conclusion}
We have presented a new risk-aware perspective on stochastic shortest path through the lens of CVaR.
For this objective, we have derived an LP and a VI based solution, both of which yield precise, provably correct results.
This analysis naturally comes at an additional price, however our experiments show that already with a simple implementation, our approach is feasible on moderately complex problems.

For future work, we aim to provide tight complexity bounds for \textbf{CVaR-SSP}.
In \cite{DBLP:journals/mor/Bonet07}, a rather general condition for polynomial convergence of VI for SSP is presented, which might be applicable to our approach, too.
Moreover, we seek to study the exact structure of optimal policies.
In particular, we conjecture that they do not alternate between actions.
For the practical side, we plan to investigate the improvements mentioned in the previous section as well as study the influence of fixed-precision rounding errors.
Finally, we want to investigate how risk-aware policies differ from purely expectation maximizing solutions in practice.
\paragraph{Acknowledgements} We thank Anna Lukina for the discussion sparking the initial idea.
Moreover, we thank the anonymous reviewers for their insightful comments.
In particular, after investigating some of their remarks, we found a substantial improvement of the VI algorithm, increasing its performance by an exponential factor.

\bibliography{main}

\clearpage
\appendix
\section{Technical Appendix}

\subsection{Proofs -- Markov Chains}
\begin{proof}[Proof of \cref{stm:var_mc_exponential}]
	\underline{Upper bound}: By assumption, the probability to reach the goal is $1$ for every state of the chain.
	Through \cite[Lemma~5.1]{DBLP:journals/jacm/BrazdilKK14} we get that $\nstepnongoal<n> \leq 2 c^n$ for large $n$, where $c = \exp(-\cardinality{\States}^{-1} p_{\min}^{\cardinality{\States}})$.
	By solving $2 c^n = \threshold$ for $n$ we obtain the result.

	\underline{Lower bound}: Consider the upper part of the MDP in \cref{fig:exponential_memory}, i.e.\ all states $s_i$ and $r_i$, and state $d$, a Markov chain.
	Moreover, let $\goalset = \{d\}$.
	After $n+1$ steps, $p^n$ executions are in $d$ and the rest is back in state $s_0$.
	To have at least $\threshold$ many executions in $d$, we thus require at least $m = \log\threshold / \log(1 - p^n)$ such rounds with length $n + 1$.
	Consequently, we need at least $- \log\threshold \cdot (n + 1) \cdot p^{-n}$ steps, as $\log(1 - x) \leq -x$ for $x \in (0, 1)$.
\end{proof}
\begin{proof}[Proof sketch of \cref{stm:cvar_ssp_polynomial}]
	Let $P$ denote the transition matrix of the Markov chain.
	Then, $\nstepprob<n> = P^n \cdot e_1$ where $e_1$ is the unit vector corresponding to the initial state.
	By setting $P_0 = P$ and iterating $P_{k + 1} = P_k \cdot P_k$, we get $P_k = P^{2^k}$.
	We repeat this process until we obtain that $\nstepnongoal<{2^{k+1}}>{} < \threshold$.
	By \cref{stm:var_mc_exponential}, this requires only polynomially many steps, each of which comprises a matrix multiplication, which again amounts to polynomially many operations.
	Then, we know that $\VaR_\threshold$ lies between $2^k$ and $2^{k+1}$.
	(Note that the size of the entries in $P_k$ may grow exponentially, hence the overall time complexity is exponential if the algorithm is implemented with arbitrary precision arithmetic.)
	We repeat this process recursively by computing $P_k \cdot P_0 \cdot e_1$, $P_k \cdot P_1 \cdot e_1$ etc., similar to a binary search for VaR.
	After polynomially many steps, we obtain the precise VaR together with the distribution of the remaining executions.
	Finally, we compute $\expectedcost(s)$ for all states in polynomial time and together obtain CVaR by \cref{stm:cvar_equation}.
\end{proof}

\subsection{Proofs -- Markov Decision Processes}

\begin{proof}[Proof of \cref{stm:cvar_optimal_policy_exists}]
	For every $n$, let $\Strategies_n$ the set of policies achieving $\VaR_\threshold(\strategy) \leq n$.
	Clearly, $\Strategies_n \subseteq \Strategies_{n + 1}$.
	Moreover, $\Strategies_n$ is exactly the set of policies reaching the goal states with probability at least $1 - \threshold$ within $n$ steps---a closed set by optimality of deterministic policies \cite[Chap.~4]{DBLP:books/wi/Puterman94}.
	Next, we define $\Strategies_n' = \Strategies_n \setminus \Strategies_{n - 1}$.
	We have that $\{\strategy \in \Strategies \mid \CVaR_\threshold(\strategy) < \infty\} \subseteq \Union_{n \in \Naturals_0} \Strategies_n = \Union_{n \in \Naturals_0} \Strategies'_n$.
	Consequently, there exists an $n$ such that
	\begin{equation*}
		\CVaR_\threshold^* = {\inf}_{\strategy \in \Strategies'_n} \CVaR_\threshold(\strategy) = n + \tfrac{1}{\threshold} {\inf}_{\strategy \in \Strategies'_n} \nstepexpectedcost<n><\strategy>.
	\end{equation*}
	By definition of $\Strategies'_n$, the witness sequence $\strategy_i \subseteq \Strategies_n$ has an accumulation point in $\Strategies_n$.
\end{proof}
\begin{proof}[Proof of \cref{stm:memoryless_after_var}]
	Let $\strategy^*$ be an optimal stationary policy minimizing the expected time to reach the goal states $\goalset$, which always exists \cite[Prop.~2]{DBLP:journals/mor/BertsekasT91}.
	Define $\strategy'$ as follows:
	For the first $n$ steps, copy $\strategy$, and starting in the $n$-th step, follow $\strategy^*$.
	Clearly, $\nstepprob<n><\strategy> = \nstepprob<n><\strategy'>$ and thus $\VaR_\threshold(\strategy) = \VaR_\threshold(\strategy')$ as well as $\nstepnongoal<n><\strategy> = \nstepnongoal<n><\strategy'>$.
	Additionally, we have $\nstepexpectedcost<n><\strategy> \geq \nstepexpectedcost<n><\strategy'>$.
	Together, we get $\CVaR_\threshold(\strategy) \geq \CVaR_\threshold(\strategy')$ by \cref{stm:cvar_equation}.
\end{proof}
\begin{proof}[Proof of \cref{stm:cvar_lp_solution}]
	\underline{First part}: Fix a policy $\strategy$ with $\VaR_\threshold(\strategy) = n$ and $\CVaR_\threshold(\strategy) = C$.
	We construct an assignment to the LP's variables:
	Set $p_{s, i} = \nstepprob<i><\strategy>(s)$ and $p_{s, a, i} = \Probability[s_i = s, a_i = a \mid \initialstate, \strategy]$.
	This assignment satisfies the first three constraints.
	For the fourth constraint, observe that by $\VaR_\threshold(\strategy) = n$, we have that $\sum_{s \in \goalset} \nstepprob<n - 1><\strategy>(s) < 1 - \threshold \leq \sum_{s \in \goalset} \nstepprob<n><\strategy>(s)$.
	By \cref{stm:cvar_equation}, we have that $\CVaR_\threshold(\strategy) = C = n + \tfrac{1}{\threshold} \nstepexpectedcost<n><\strategy>$.
	Since $\nstepexpectedcost<n><\strategy> = \sum_{s \in \States} \nstepprob<n><\strategy>(s) \cdot \expectedcost(s) = \sum_{s \in \States} p_{s, n} \cdot \expectedcost(s)$, we get that $\nstepexpectedcost<n><\strategy> = \threshold \cdot (c - n)$, proving the claim.

	\underline{Second part}: We construct the policy $\strategy$ as follows.
	For the first $n$ steps, at step $i$ in state $s$, choose action $a$ with probability $p_{s, a, i}$.
	Afterwards, i.e.\ starting from step $n$, in state $s$ follow a policy achieving the optimal expected cost $\expectedcost(s)$ (note the similarity to \cref{stm:memoryless_after_var}).
	Clearly, $\nstepprob<i><\strategy>(s) = p_{s, i}$ and thus $v = \nstepexpectedcost<n><\strategy>$.
	Now, we need to distinguish two cases.
	We have that $\VaR_\threshold(\strategy) = n - 1$ if $\sum_{s \in \goalset} p_{s, n - 1} = 1 - \threshold$ and $\VaR_\threshold(\strategy) = n$ otherwise.
	In the latter case, we directly get that $\CVaR_\threshold(\strategy) = n + \frac{1}{\threshold} v$ by \cref{stm:cvar_equation}.
	In the former, observe that in step $n - 1$ a fraction of exactly $1 - \threshold$ executions has reached the goal states.
	Consequently, the remaining $\nstepnongoal<n-1><\strategy> = \threshold$ executions all need to perform at least one more step, thus $\nstepexpectedcost<n - 1><\strategy> = \threshold + \nstepexpectedcost<n><\strategy>$.
	Inserting yields $\CVaR_\threshold(\strategy) = (n - 1) + \frac{1}{\threshold} \nstepexpectedcost<n - 1><\strategy> = n + \frac{1}{\threshold} \nstepexpectedcost<n><\strategy> = n + \frac{1}{\threshold} v$.
\end{proof}
\begin{proof}[Proof of \cref{stm:var_bound_from_cvar}]
	We have $\VaR_\threshold(\strategy^*) \leq \CVaR_\threshold(\strategy^*) \leq \CVaR_\threshold(\strategy)$, where the first inequality follows from definition and the second from optimality of $\strategy^*$.
\end{proof}
\begin{proof}[Proof sketch of \cref{stm:mdp_cvar_exponential_bound}]
	Since a proper policy exists, there also exists a proper memoryless deterministic policy $\strategy^p$ \cite[Prop.~2]{DBLP:journals/mor/BertsekasT91}.
	As $\strategy^p$ is deterministic, \cref{stm:var_mc_exponential} is applicable with $p_{\min}$ being the smallest transition probability in the MDP, and $\VaR_\threshold(\strategy^p)$ is at most exponential, as is $e(s)$ by similar reasoning.
	Together, $n + \tfrac{1}{\threshold} \nstepexpectedcost<n><\strategy^p>$ for $n = \VaR_\threshold(\strategy^p)$ is of at most exponential size, proving the claim through \cref{stm:cvar_equation}.
\end{proof}
\begin{proof}[Proof of \cref{stm:cvar_from_pareto}]
	\underline{First part}: Fix $n$ and $(p, E) \in \pareto<n><s>$ together with the policy $\strategy$ and set $\threshold = 1 - p$.
	Clearly, $\VaR_\threshold(\strategy) \leq n$ since at least a fraction of $p$ executions reach the goal within $n$ steps.
	Let $\VaR_\threshold(\strategy) = n' \leq n$ and $p' \leq p$ the exact fraction of executions that reach within $n'$ steps.
	By \cref{stm:cvar_equation} we get $\CVaR_\threshold(\strategy) = n' + \frac{1}{\threshold} \nstepexpectedcost<n'><\strategy>$.
	It remains to show that $n' + \frac{1}{\threshold} \nstepexpectedcost<n'><\strategy> \leq n + \frac{1}{\threshold} E$.
	After $n'$ steps, at most $1 - p'$ executions have not reached the goal.
	Thus $\nstepexpectedcost<n'><\strategy> \leq (1 - p') \cdot (n - n') + E \leq \threshold \cdot (n - n') + E$.

	\underline{Second part}: Fix a state $s$, a policy $\strategy$ and assume that $\VaR_\threshold(\strategy) = n$ and $\CVaR_\threshold(\strategy) = C$.
	We show that $(1 - \threshold, \threshold \cdot (C - n)) \in \pareto<n><s>$.
	The probability to reach the goal states in $n$ steps under $\strategy$ is at least $1 - \threshold$ by the definition of VaR, proving the first component.
	From \cref{stm:cvar_equation} we get $\CVaR_\threshold(\strategy) = C = n + \frac{1}{\threshold} \nstepexpectedcost<n><\strategy>$.
	Reordering yields $\nstepexpectedcost<n><\strategy> = \threshold \cdot (E - n)$, proving the second component.
\end{proof}
\begin{proof}[Proof of \cref{stm:pareto_shape}]
	\underline{Closure}:
		If we have that $(p, E) \in \pareto<n><s>$, we also have that $(p', E), (p, E') \in \pareto<n><s>$ for all $0 \leq p' \leq p$ and $E' \geq E$ by definition.

	\underline{Convexity}:
		Let $\strategy$ and $\strategy'$ be two policies corresponding to two points $(p, E), (p', E') \in \pareto<n><s>$ and fix $\lambda \in [0, 1]$.
		Following $\strategy$ with probability $\lambda$ and $\strategy'$ with probability $1 - \lambda$ reaches the goal set with at least $\lambda p + (1 - \lambda) p'$ and similar for the expectation. 

	\underline{Polygon}:
		We prove by induction that $\pareto<n><s>$ is a polygon with deterministic policies as vertices.

		For $n = 0$, we either have that $\pareto<0><s> = [0, 1] \times \Reals_{\geq 0}$ if $s \in \goalset$, or, if $s \notin \goalset$, $\pareto<0><s> = \{0\} \times [e(s), \infty)$.
		The first case follows trivially from the definition.
		For the second, observe that the probability to reach the goal in $0$ steps is zero and the expected time for all remaining executions is at least $e(s)$. 
		In both cases, $\pareto<0><s>$ is a polygon and the extremal points $(1, 0)$ and $(0, e(s))$, respectively, are achievable by a stationary deterministic policy.

		For the induction step, fix $n$ and a state $s$.
		We prove that $(p, E) \in \pareto<n+1><s>$ iff there exist a distribution over the actions $w : \stateactions(s) \to [0, 1]$ and achievable points $(p_{a, s'}, E_{a, s'}) \in \pareto<n><s'>$ for all $a \in \stateactions(s), s' \in \States$ such that
		\begin{equation*}
			(p, E) = {\sum}_{a \in \stateactions(s), s' \in \States} w(a) \cdot \mdptransitions(s, a, s') \cdot (p_{a, s'}, E_{a, s'})
		\end{equation*}
		The first equality follow directly from linearity of reachability:
		If the successors under action $a$ can reach the goal with probabilities $p_{a, s'}$ in $n$ steps, then the current state can reach the goal with the average of these probabilities in $n + 1$ steps (note the similarity to regular value iteration for reachability).
		For the second equality, recall the interpretation of the Pareto set:
		For all actions $a$ and successors $s'$ there exists a policy $\strategy_{a, s'}$ such that after $n$ steps at least a fraction of $p_{a, s'}$ executions have reached the goal and the expected time to reach the goal is at most $E_{a, s'}$.
		So, in state $s$, we can take one step and then simply follow these respective policies to achieve the values in the equation.
		Dually, if there is a policy $\strategy$ for $(p, E) \in \pareto<n+1><s>$, we immediately get policies achieving the respective values in the successors.
		Together with the induction hypothesis, this linear characterization proves that $\pareto<n+1><s>$ is a polygon, and the extremal points are achievable by Markovian deterministic policies.
\end{proof}
\begin{proof}[Proof of \cref{stm:cvar_algo_correct}]
	\underline{Termination}: We show that there always exists an $n$ such that $\pareto<n><\initialstate>$ contains $(1 - \threshold, E)$ for any $E$.
	Since we always have a proper policy $\strategy$, $\VaR_\threshold(\strategy) = n < \infty$ and thus $(1 - \threshold, \nstepexpectedcost<n><\strategy>) \in \pareto<n><\initialstate>$ by \cref{stm:cvar_from_pareto}.
	By \cref{stm:mdp_var_exponential_bound}, we know that $\VaR_\threshold(\strategy)$ is at most exponentially large, proving the step bound.

	\underline{Correctness}: Let $\strategy^*$ be an optimal policy, i.e.\ achieving the optimal CVaR.
	Further, let $\VaR_\threshold(\strategy^*) = n^*$ and $\CVaR_\threshold(\strategy^*) = E^*$.
	By \cref{stm:cvar_from_pareto}, we have that $(1 - \threshold, \threshold \cdot (E^* - n^*)) \in \pareto<n^*><\initialstate>$.
	Moreover, $\CVaR_\threshold(\strategy) \geq n^*$ for every policy $\strategy$, so the algorithm runs until at least iteration $n^*$.
	Consequently, $\mathtt{c} \leq E^*$ when the algorithm terminates.
	To conclude, if we had $\mathtt{c} < E^*$, there must exist another policy $\strategy'$ which achieves a better CVaR, again by virtue of \cref{stm:cvar_from_pareto}.
	Together, we have that $\mathtt{c} = E^*$ at the end.

	\underline{Runtime}:
	As argued above, the main loop is iterated at most exponentially often.
	Naively, we see that $\pareto<n><s>$ can have at most exponentially many vertices in $n$ by \cref{stm:pareto_shape}, as there are at most exponentially many deterministic policies.
	However, since we compute the convex hull of polygons, we can obtain a tighter bound.
	Recall that the Minkowski sum of two convex polygons with $i$ and $j$ vertices, respectively, has at most $i + j$ vertices.
	Thus, with $v_n(s)$ the number of vertices of $\pareto<n><s>$, we get $v_{n+1}(s) \leq \abs{\stateactions(s)} \cdot \sum_{s' \in \States} v_n(s')$.
	Consequently, $v_n(s) \in \mathcal{O}((\abs{\Actions} \abs{\States})^n)$ for all $s$, since $v_1(s) = 1$.
\end{proof}

\end{document}